\tikzset{
	ncbar angle/.initial=90,
	ncbar/.style={
		to path=(\tikztostart)
		-- ($(\tikztostart)!#1!\pgfkeysvalueof{/tikz/ncbar angle}:(\tikztotarget)$) 
		-- ($(\tikztotarget)!($(\tikztostart)!#1!\pgfkeysvalueof{/tikz/ncbar angle}:(\tikztotarget)$)!\pgfkeysvalueof{/tikz/ncbar angle}:(\tikztostart)$) 
		\tikztonodes
		-- (\tikztotarget) 
	},
	ncbar/.default=0.5cm,
}
\theoremstyle{plain}
\newtheorem{theorem}{Theorem}
\newtheorem{corollary}[theorem]{Corollary}
\newtheorem{proposition}[theorem]{Proposition}
\theoremstyle{definition}
\newtheorem{definition}{Definition}
\newtheorem{example}[definition]{Example}
\newtheorem{construction}{Construction}
\newcommand{\C}{\mathtt{C}}
\DeclareMathAlphabet{\mathbfsl}{OT1}{ppl}{b}{it} 
\newcommand{\bx}{\mathbfsl{x}}
\newcommand{\by}{\mathbfsl{y}}
\newcommand{\bz}{\mathbfsl{z}}
\newcommand{\B}{\mathcal{B}}
\newcommand{\vA}{\mathbfsl{A}}
\newcommand{\vB}{\mathbfsl{B}}
\newcommand{\vC}{\mathbfsl{C}}
\newcommand{\vH}{\mathbfsl{H}}
\newcommand{\cS}{\mathcal{S}}
\newcommand{\cC}{\mathcal{C}}
\newcommand{\bc}{\mathbf{c}}
\newcommand{\etal}{{\em et al.}}
\newcommand{\floor}[1]{{\left\lfloor #1\right\rfloor}}
\newcommand{\ceil}[1]{{\left\lceil #1\right\rceil}}
\title{Deletion\,Correcting\,Codes~for~Efficient\,DNA\,Synthesis\\[0mm]}
 \author{{Johan Chrisnata},\!\IEEEauthorrefmark{1}
	 	{Han Mao Kiah},\!\IEEEauthorrefmark{1}
	 	and {Van Long Phuoc Pham}\IEEEauthorrefmark{1}\\[1mm]
	 	\IEEEauthorblockA{\IEEEauthorrefmark{1} \footnotesize School of Physical and Mathematical Sciences, 
		 		Nanyang Technological University, Singapore\\[0mm]}
	 	{  \{johan.chrisnata,hmkiah,phuoc002\}@ntu.edu.sg.edu}\\[-4mm]}
\date{}
\begin{document}
	\maketitle
	
	\hspace*{-14pt}
	\begin{abstract}
		The synthesis of DNA strands remains the most costly part of the DNA storage system. 
		Thus, to make DNA storage system more practical, the time and materials used in the synthesis process have to be optimized. We consider the most common type of synthesis process where multiple DNA strands are synthesized in parallel from a common alternating supersequence, one nucleotide at a time. 
		The synthesis time or the number of synthesis cycles is then determined by the length of this common supersequence. 
		In this model, we design quaternary codes that minimizes synthesis time that can correct deletions or insertions, which are the most prevalent types of error in array-based synthesis. 
		We also propose polynomial-time algorithms that encode binary strings into these codes and show that the rate is close to capacity.
	\end{abstract}

	\setstretch{1.05}

	\section{Introduction}\label{sec:intro}
	DNA storage has been progressing rapidly in the last decade due to its viability in storing information, 
	namely its durability and high storage density (see \cite{Yazdi.etal:2015b,Shomorony.2022} and the references therein). 
	The process of storing information in the form of synthetic DNA strands involves converting original binary information into quaternary strings (nucleotide bases) and writing those information into DNA strands using the synthesis machine. 
	The corresponding DNA strands are then stored in DNA pools. 
	To read/retrieve the stored data, the user employs a sequencing platform that creates multiple possibly erroneous copies of the same strand which are clustered and decoded to the original binary strings.
	
	Although many works have been done on improving the feasibility of DNA storage, 
	they are still far from being practical in storing large data, mainly because of its cost. 
	Specifically, the DNA synthesis remains the most costly part of the DNA storage model. 
	In order to make the synthesis part efficient and less costly, we need to understand how the synthesis process works.
	
	To limit the error in the synthesis process, typically each DNA strand contains no more than 250 nucleotides~\cite{Carmean2019}
	(see also Table 1.1 in~\cite{Shomorony.2022} for experiments for DNA data storage). 
	Thus, the user splits the converted quaternary strings into multiple short sequences to synthesize and stores them in a container in an unordered manner. The task of the synthesizing these multiple strands is usually array-based, which means that they are done in parallel \cite{Ceze2019}. To be exact, the synthesizer scans a fixed supersequence of DNA nucleotides and appends a nucleotide at a time to the corresponding strands. In other words, in each cycle of synthesis, the machine only appends one nucleotide to a subset of the DNA strands that require that particular nucleotide. As a result, all of the synthesized DNA strands must be subsequences of the fixed supersequence. The length of the fixed supersequence then determines the number of cycles required to generate all strands which corresponds to the total synthesis time. Our goal now is thus to minimize the total synthesis time which is consistent with decreasing the total cost of DNA synthesis since each cycle requires reagents and chemicals \cite{Ceze2019}.
	
	Lenz \etal \cite{Lenz2020} showed that the supersequence that maximizes the information rate is the alternating quaternary sequence. Therefore, for this paper, we always use the alternating quaternary sequence as our supersequence. 
	For example, a naive scheme in which the binary information is encoded into DNA strings is using the rule $(00) \xrightarrow{}A$, $(01) \xrightarrow{}C$, $(10) \xrightarrow{}G$, $(11) \xrightarrow{}T$. 
	In this case, an alternating sequence which repeats the substring ACGT is scanned and the synthesis cycle begins. 
	This scheme achieves an information rate of 0.5 bits/cycle since four cycles are required to synthesize one nucleotide which is equivalent to two bits of information. 
	They then showed that by introducing redundancy to the synthesized strands using constrained coding, they can improve the information rate from 0.5 bits/cycle to 0.947 bits/cycle. 
	However, the encoding scheme does not correct errors that happen during synthesis.
	
In this paper, we design codes for efficient DNA synthesis in the presence of errors.
The most prevalent types of errors that stem from array-based synthesis are deletion and insertion \cite{Ceze2019}.
We use $\B$ to denote an error-function ball that maps a quaternary sequence $\bx$ to the set of its possible errors, $\B(\bx)$. To this end, we define an $(n,T,\B)$-synthesis code to be a quaternary code $\C$ of length $n$ which is a subset of the set of all subsequences of the alternating quaternary supersequence of length $T$, such that for any two codewords $\bx$ and $\by$ in $\C$, $\B(x) \cap \B(y)=\emptyset$. Observe that there is a trade off between the synthesis time $T$ or the length of the fixed supersequence with the rate of the code $\C$. 
Thus, we aim to construct optimal codes with high information rate per symbol with different synthesis time constraints that also corrects deletion or insertion errors. For this paper, we fix $\B$ to be a single insertion or deletion error ball (\textit{indel}), which means that $\B(\bx)$ is the set of all words that can be obtained by at most a single deletion or at most a single insertion from the word $\bx$.
In addition, we also design efficient encoders for our codes.
	
	To correct a single indel, we have the famous binary Varshamov-Tenengolts (VT) codes~\cite{Levenshtein1966}.
	This class of codes was later extended to the nonbinary case by Tenengolts~\cite{Tenengolts1984}, who also provided a linear-time systematic encoder. 
	However, in the worst case, a length-$n$ codeword of a quaternary VT-code may have synthesis time $4n$.
	Hence, in this paper, we fix a synthesis time $n< T \le 4n$ and study ways to encode into quaternary VT-codewords whose synthesis time is at most $T$.

	Before we formulate our problem, we mention some works that adapt VT codes for DNA based data storage. 
	In~\cite{Cai2021,NCIK2021}, the authors provided efficient encoders that map messages into VT codewords that obey certain GC-content and homopolymer constraints.
	\cite{Gabrys2017,NCIS2022} modified VT codes to correct adjacent transpositions, chromosome-inversions, 
	while \cite{CKNY2021, CKY2022} adapted VT codes to correct deletions using more reads.

	\begin{figure}
		\centering
		\includegraphics[width=10cm, scale=0.5]{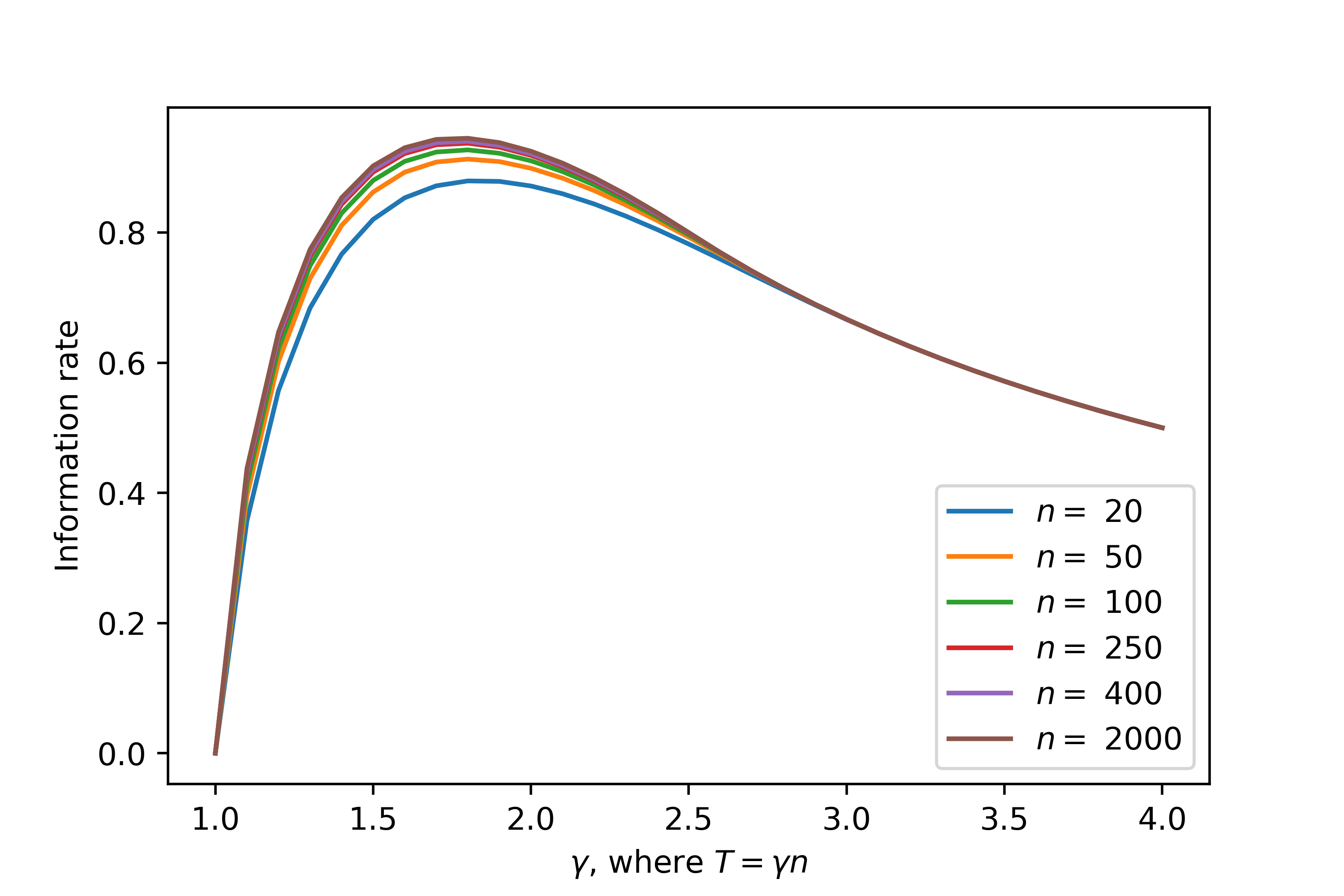}
		\caption{Plot for information rate for $W(n, \gamma n)$}
		\label{graph:information_rate}
	\end{figure}
	
	\section{Preliminaries}
	\label{sec:problem}
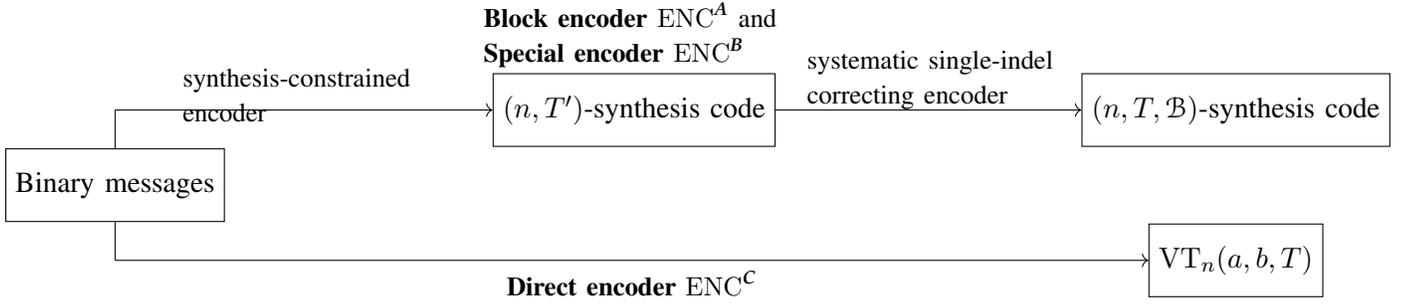
\begin{figure*}[!h]
	
	\begin{center}
		\begin{tikzpicture}
			\tikzset{vertex/.style = {circle,fill, inner sep = 2pt}}
			\tikzset{edge0/.style = {->,> = latex',blue,thick}}
			\tikzset{edge1/.style = {->,> = latex',red,dashed,thick}}
			\node[state,rectangle] (B) at (5,3) {$(n,T')$-synthesis code}; 
			\node[state,rectangle] (C) at (13,3) {$(n,T,\B)$-synthesis code};
			\node[state,rectangle] (D) at (13,1) {${\rm VT}_n(a,b,T)$};
			\node[state,rectangle] (A) at (-1.9, 2) {Binary messages};
			\node[text width=4cm] at (5,4) {\small \textbf{Block encoder} ${\rm ENC}^\vA$ and \textbf{Special encoder} ${\rm ENC}^\vB$ };
			\node[text width=4cm] at (1,3.2) {\small synthesis-constrained encoder};
			\node[text width=4cm] at (9.3,3.4) {\small systematic single-indel correcting encoder};
			
			\path[->] 
			(-1.9,3) edge [] node  {} (B); 
			\path[->] (B) edge [] node  {} (C); 
			\path[-] (A) edge [] node  {} (-1.9,1); 
			\path[-] (A) edge [] node  {} (-1.9,3); 
			\path[->] (-1.9,1) edge [] node[pos=0.5,below]  {\small \textbf{Direct encoder} ${\rm ENC}^\vC$} (D); 
		\end{tikzpicture}
	\end{center}
	\caption{Single-indel correcting synthesis-constrained encoder}\label{fig:encoders}
\end{figure*}
	Let $\Sigma=\{1,2,3,4\}$ be an alphabet of size four to represent the four nucleotides of DNA. 
	Throughout this paper, we use the shifted modulo operator $(a\bmod 4) \in \{1,2,3,4\}$. 
	Let $\Sigma^n$ denote the set of all quaternary words of length $n$ and $\Sigma^{n*}=\Sigma^{n-1} \cup \Sigma^n \cup \Sigma^{n+1}$. 
	For a positive integer $T$, let $\mathbf{a}_T=12341234\cdots$ denote the \textit{alternating quaternary sequence} of length $T$.
	For a quaternary word $\bx=x_1 x_2 \cdots x_n \in \Sigma^n$, we define the \textit{differential word of $\bx$} to be $D(\bx)\triangleq (x_1 \bmod 4 , x_2-x_1 \bmod 4, x_3-x_2 \bmod 4,\ldots,x_n-x_{n-1} \bmod 4)$. Note that the map $D:\Sigma^n \xrightarrow{} \Sigma^n$ is a one to one correspondence and hence is invertible. For a quaternary word $\bx \in \Sigma^n$, we define $||\bx||\triangleq \sum_{i=1}^{n}{x_i}$ to be the $L_1$-norm of $\bx$. We define the \textit{synthesis time of $\bx$}, denoted $S(\bx)$, with respect to the alternating quaternary sequence to be the smallest integer $T$ such that $\bx$ is a subsequence of $\mathbf{a}_T$. Furthermore, we can show that $S(x)=||D(\bx)||$.
	
	\begin{example}
		Let $\bx=23144$. Note that $\bx$ is a subsequence of $\mathbf{a}_{12}=123412341234$, and thus $S(\bx)=12$. On the other hand, $D(\bx)=21234$, and $||D(\bx)||=12=S(\bx)$. 
	\end{example}
	
	Let $\cC\subseteq \Sigma^n$. For $\alpha \in \Sigma$, we denote $\cC \circ \alpha=\{\bx \circ \alpha : \bx \in \cC\}$, where $\bx \circ \alpha$ is the word obtained by appending $\alpha$ to $\bx$. We say that $\cC$ ia an {\em $(n,T)$-synthesis code} if for every $\bx\in \cC$, we have $S(\bx)\leq T$. 
	Furthermore, $\cC$ is an $(n,T,\B)$-synthesis code if $\cC$ is an $(n,T)$-synthesis code and for every $\bx,\by \in \cC$, we have $\B(\bx) \cap \B(\by) =\emptyset$. 
	We define its {\em redundancy} to be $2n - \log_2 |\mathcal{C}|$ bits. 
	Let $W(n,T)$ denote the largest $(n,T)$-synthesis code and let $A(n,T)$ denote its size. 
	Likewise, we define $A(n,T,\B)$ to be the maximum size of an $(n,T,\B)$-synthesis code. 
	For $n<T\leq 4 n$, we define the {\em information rate} of an $(n,T)$-synthesis code $\cC$ to be $R(\cC,T)=\frac{\log_2 |\cC|}{T}$ bits per cycle. 
	Thus, for some $1< \gamma \leq 4$, 
	we define the corresponding {\em capacity} to be $C(\gamma)=\limsup_{n \to \infty}{\frac{\log_2 A(n,\gamma n)}{\gamma n}}$, 
	and the {\em asymptotic rate} $C(\gamma,\B)=\limsup_{n \to \infty}{\frac{\log_2 A(n,\gamma n,\B)}{\gamma n}}$.
	
	Although there is no explicit closed formula for $A(n,T)$ for any value of $n$ and $T$, we can find its value using recursion (see Section~\ref{subsec:generalencoder} for details). 
	Nevertheless, the generating fuction for $A(n,T)$ was provided by Lenz \etal{} \cite{Lenz2020}.
	\begin{proposition}[Lenz \etal\cite{Lenz2020}]
		We have that 
		\[\sum_{T\ge 0} A(n,T)z^T = \frac{1}{1-z}(z+z^2+z^3+z^4)^n=\frac{z^n(1-z^4)^n}{(1-z)^{n+1}}\,.\]
	\end{proposition}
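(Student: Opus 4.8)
The plan is to convert the synthesis-time constraint into a bounded-sum constraint on independent coordinates, and then simply read off the generating function. First I would note that, by definition, every $(n,T)$-synthesis code is a subset of $\{\bx\in\Sigma^n : S(\bx)\le T\}$, so the largest such code is that set itself; hence $A(n,T)=\bigl|\{\bx\in\Sigma^n : S(\bx)\le T\}\bigr|$. Using that $D\colon\Sigma^n\to\Sigma^n$ is a bijection and that $S(\bx)=||D(\bx)||$, this count equals $\bigl|\{\vd\in\Sigma^n : d_1+d_2+\cdots+d_n\le T\}\bigr|$, i.e.\ the number of length-$n$ words over $\{1,2,3,4\}$ whose coordinate sum is at most $T$.

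Next, for $s\ge 0$ set $N(n,s)=\bigl|\{\vd\in\Sigma^n : ||\vd||=s\}\bigr|$. Since the $n$ coordinates of $\vd$ are chosen freely and independently from $\{1,2,3,4\}$, with $d_i$ contributing a factor $z^{d_i}$, the ordinary generating function factors as $\sum_{s\ge0}N(n,s)z^s=\bigl(z+z^2+z^3+z^4\bigr)^n$.

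Now $A(n,T)=\sum_{s=0}^{T}N(n,s)$ is precisely the sequence of partial sums of $N(n,\cdot)$, and multiplying an ordinary generating function by $\tfrac{1}{1-z}$ produces exactly its partial sums. Therefore
\[
\sum_{T\ge0}A(n,T)z^T=\frac{1}{1-z}\sum_{s\ge0}N(n,s)z^s=\frac{1}{1-z}\bigl(z+z^2+z^3+z^4\bigr)^n,
\]
which is the first claimed identity. The second form follows by factoring $z+z^2+z^3+z^4=z\,\tfrac{1-z^4}{1-z}$, so that $\bigl(z+z^2+z^3+z^4\bigr)^n=\tfrac{z^n(1-z^4)^n}{(1-z)^n}$, and one more factor of $\tfrac{1}{1-z}$ gives $\tfrac{z^n(1-z^4)^n}{(1-z)^{n+1}}$.

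There is no real obstacle here: the only substantive ingredient is the already-recorded identity $S(\bx)=||D(\bx)||$ together with the bijectivity of $D$, which is exactly what turns ``synthesis time $\le T$'' into a clean constraint on i.i.d.\ $\{1,2,3,4\}$-valued coordinates; everything after that is the standard partial-sum / convolution manipulation of generating functions. If one prefers not to invoke $S(\bx)=||D(\bx)||$ as a black box, it can be justified in passing by the greedy embedding of $\bx$ into $\mathbf{a}_T$: the number of supersequence symbols consumed to place $x_i$ immediately after $x_{i-1}$ is $x_i-x_{i-1}\bmod 4$ (and $x_1\bmod 4$ for the first symbol), and these increments sum to $||D(\bx)||$.
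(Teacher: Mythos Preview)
Your proof is correct. The paper itself does not prove this proposition at all: it is stated as a result of Lenz \etal~\cite{Lenz2020} and simply cited without argument. Your approach---passing to the differential word via the bijection $D$ so that $S(\bx)\le T$ becomes $\sum_i d_i\le T$ with each $d_i\in\{1,2,3,4\}$, reading off the product generating function $(z+z^2+z^3+z^4)^n$ for the exact-sum counts, and then multiplying by $\frac{1}{1-z}$ to pass to partial sums---is the natural one and is fully rigorous given the identities $S(\bx)=||D(\bx)||$ and the bijectivity of $D$ already recorded in the paper.
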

	
	The graph for the values of information rate of $W(n,T)$ for various values of $T=\gamma n$ is given in Figure \ref{graph:information_rate}.  
	
	Let $\B$ be a single deletion or a single insertion error (indel error), which means that $\B(\bx)$ is the set of all words that can be obtained from $\bx$ by at most one deletion or at most one insertion. Our interest is now to estimate the value of $A(n,T,\B)$, namely to give upper bound and lower bound of the value. In the next section, we present a construction of an $(n,T,\B)$-synthesis code for any $T$, where $n< T \leq 4n$. We also provide encoding algorithm/encoder to construct such codes. To be more formal, we define the following notion of encoder.
	
	\begin{definition}\label{def:constrainedencoder}
		The map ${\rm ENC} :\{0,1\}^m \xrightarrow{} \Sigma^n$ is a \textit{single-indel correcting synthesis-constrained encoder} if there exists a \textit{decoder} map ${\rm DEC} : \Sigma^{n*} \xrightarrow{} \{0,1\}^m$ such that
		\begin{enumerate}[(i)]
			\item For every $\bx \in \{0,1\}^m$, if $\bc={\rm ENC}(\bx)$ and $\by \in \B(\bc)$, then $\text{DEC}(\by)=\bx$,
			\item For every $\bx \in \{0,1\}^m$, $S({\rm ENC}(\bx)) \leq T$.
		\end{enumerate}
		This means that the code $\cC=\{{\rm ENC}(\bx): \bx \in \{0,1\}^m \}$ is an $(n,T,\B)$-synthesis code.
	\end{definition}
	
	In this paper, we provide several \textit{single-indel correcting synthesis-constrained encoders} that map binary strings to an $(n,T,\B)$-synthesis code. Specifically, for $T\geq 2.5 n$, the encoder in Algorithm \ref{alg:constrainedencoder} runs in linear time, while for general $T$, the encoders in Algorithm \ref{alg:constrainedencoderquad} and Algorithm~\ref{alg:directencoder} run in $O(n^3)$ and $O(n^5)$ time respectively.
	Before we describe these encoders, we describe the main ingredient of the construction -- the VT-codes.
	
	\subsection{Varshamov-Tenengolts (${\rm VT}$) codes}
	\begin{definition}
		Let $\bx$ be a binary word of length $n$. Then the {\em VT-syndrome} of $\bx$ is defined as ${\rm VT}(\bx)=\sum_{i=1}^n{ix_i}$. 
	\end{definition}
	
	\begin{definition}
		Let $\bx \in \Sigma^n$ be a quaternary word of length $n$. We define the {\em auxiliary binary sequence} of 
		$\bx$ to be the binary sequence $\tilde{\bx}$ of length $n-1$ such that
		$\tilde{x}_i=
		\begin{cases}
			1 &\quad \text{if } x_{i+1} \geq x_i\\
			0 &\quad \text{if } x_{i+1} < x_i,\\
		\end{cases}$
		for all $1 \leq i\leq n-1$.
	\end{definition}
	\begin{construction}[Tenengolts\cite{Tenengolts1984}]
		Let $n$ be a positive integer. We define
		${\rm VT}_n(a,b)=\{ \bx \in \Sigma^n : {{\rm VT}}(\tilde{\bx})=a \bmod n,$ and $\sum_{i=1}^n{x_i}=b \bmod 4\}.$
	\end{construction}
	\begin{theorem}[Tenengolts\cite{Tenengolts1984}]
		For any $a \in \mathbb{Z}_n$ and $b \in \mathbb{Z}_4$, the set ${\rm VT}_n(a,b)$ is an $(n,4n,\B)$-synthesis code. Moreover, there exist $a \in \mathbb{Z}_n$ and $b \in \mathbb{Z}_4$ such that the redundancy of ${\rm VT}_n(a,b)$ is at most $\ceil{\log_2 n}+ 2$. Furthermore, there exists a systematic quaternary encoder that maps binary sequence to a quaternary code that can correct single insertion or single deletion error with redundancy at most $\log_2 n +12$.
	\end{theorem}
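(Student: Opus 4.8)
The plan is to treat the three assertions separately. For the first, the synthesis‑time bound is free: since $S(\bx)=\|D(\bx)\|$ and every coordinate of $D(\bx)$ lies in $\{1,2,3,4\}$, we have $S(\bx)\le 4n$ for all $\bx\in\Sigma^n$, so ${\rm VT}_n(a,b)$ is automatically an $(n,4n)$‑synthesis code and the real content is single‑indel correction, i.e.\ that $\B(\bx)\cap\B(\by)=\emptyset$ for distinct $\bx,\by\in{\rm VT}_n(a,b)$. Equivalently, I would exhibit a decoder that, from any $\by\in\Sigma^{n*}$ obtained from a codeword $\bx$ by at most one insertion or deletion, reconstructs $\bx$. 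The decoder first reads $|\by|$ to decide whether an insertion, a deletion, or no error occurred (if $|\by|=n$ then $\by=\bx$ since $\B$ contains only indels). In the deletion case $|\by|=n-1$, I would follow Tenengolts and track the effect on the pair $\bigl(\tilde{\bx},\ \sum_i x_i \bmod 4\bigr)$: deleting $x_j$ perturbs the auxiliary sequence $\tilde{\bx}$ by a single bit deletion together with a possible local change of the bit at the splice point, so $\tilde{\by}$ lies within a single (perturbed) deletion of $\tilde{\bx}$; the binary VT‑syndrome constraint ${\rm VT}(\tilde{\bx})=a\bmod n$ then localises the deletion to a run of $\bx$, and the residue $\sum_i x_i\equiv b\pmod 4$ pins down the value $x_j\in\{1,2,3,4\}$ exactly (since the change in the $L_1$‑sum is $-x_j$). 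Reinserting that value into the identified run recovers $\bx$. The insertion case is symmetric, and I would write out the run‑based argument to confirm the reconstruction is unambiguous in every branch.

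For the second assertion I would use a counting (averaging) argument: the $4n$ sets ${\rm VT}_n(a,b)$ with $a\in\mathbb{Z}_n$, $b\in\mathbb{Z}_4$ partition $\Sigma^n$, so some class has size at least $4^{n}/(4n)=4^{n-1}/n$. Its redundancy is then at most $2n-\log_2\bigl(4^{n-1}/n\bigr)=2+\log_2 n\le \ceil{\log_2 n}+2$.

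The third assertion, the existence of a \emph{systematic} quaternary encoder, is where the actual work lies and is the step I expect to be the main obstacle. The difficulty is the standard one for systematic VT‑type encoding: the redundant symbols sit in coordinates that themselves contribute to ${\rm VT}(\tilde{\bx})$ and to $\sum_i x_i$, so one cannot simply ``write the syndrome in the clear,'' and in addition the redundant block has to be protected so the decoder can exploit it \emph{before} correcting the information part. The plan is to (i) embed the binary message into all but a short reserved block of $O(\log n)$ coordinates via a fixed two‑bits‑to‑one‑symbol map; (ii) choose the reserved symbols so that the auxiliary‑sequence VT‑syndrome and the $L_1$‑sum attain their prescribed residues modulo $n$ and modulo $4$ — feasible because those reserved coordinates can realise every residue pair — while simultaneously giving the reserved block a short marker pattern so that a single indel falling inside it is locally self‑correctable without knowledge of the message; and (iii) show the resulting word still decodes under one indel by first determining whether the error hit the protected block or the information part and then invoking the part‑one decoder accordingly. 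Accounting for the reserved coordinates together with the marker overhead yields redundancy at most $\log_2 n+12$ bits, and since each step is either a bounded local computation or a single linear scan, the encoder runs in linear time; I would attribute the original construction to Tenengolts and supply the residue‑hitting and block‑protection details.
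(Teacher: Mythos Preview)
The paper does not supply its own proof of this theorem: it is stated as a result of Tenengolts~\cite{Tenengolts1984} and simply cited, with the subsequent Theorem~\ref{thm:indeldecoder} likewise cited for the decoder. So there is nothing in the paper to compare your argument against line by line; your proposal is in effect a reconstruction of Tenengolts' original proof, and on the first two assertions it is correct and standard --- the $S(\bx)\le 4n$ bound is indeed free, the decoder sketch for single-indel correction via the auxiliary binary sequence plus the $L_1$-sum residue is the right mechanism, and the pigeonhole bound for the second assertion is exactly the intended argument (the paper uses precisely this averaging in its proof of Theorem~\ref{thm:VTabT}).

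On the third assertion there is a small conflation worth flagging. You describe step~(ii) as choosing the reserved symbols so that the \emph{whole codeword} lands in ${\rm VT}_n(a,b)$, and \emph{also} equipping the reserved block with a self-synchronising marker. These are two different systematic-encoding strategies, and you only need one: if the full word is forced into ${\rm VT}_n(a,b)$ then part one already gives global single-indel correction and no marker is needed; conversely, Tenengolts' actual construction (and the paper's improved ${\rm ENC}^\vH$ in Algorithm~\ref{alg:encoder}) does \emph{not} force the full word into a fixed ${\rm VT}_n(a,b)$ but instead writes the VT-syndrome and $L_1$-residue of the \emph{message part} explicitly into the reserved block, protected by a two-symbol marker $c_{m+1}=c_{m+2}$ that lets the decoder decide whether the indel hit the data or the tail. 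Your step~(iii) matches the latter design, so the cleanest fix is to drop the ``attain their prescribed residues'' phrasing in~(ii) and say instead that the reserved block stores the message's syndromes in the clear behind a marker. With that clarification your plan is sound and matches the cited construction; the paper later sharpens the redundancy to $\ceil{\log_2 n}+6$ via ${\rm ENC}^\vH$, but for the theorem as stated the $\log_2 n+12$ bound is what Tenengolts obtained.
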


	In the same paper, Tenengolts also provided a linear-time decoder that corrects a single indel.
	
	\begin{theorem}[Tenengolts\cite{Tenengolts1984}]\label{thm:indeldecoder}
		Let $a \in \mathbb{Z}_n$ and $b \in \mathbb{Z}_4$. Given $\bx \in {\rm VT}_n(a,b)$, for any $\by \in \B(\bx)$, there exists a linear-time decoder ${\rm DEC}_{a,b}:\Sigma^{n*} \xrightarrow{} \Sigma^n$, such that ${\rm DEC}_{a,b}(\by)=\bx$.
	\end{theorem}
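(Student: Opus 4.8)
Since this is precisely Tenengolts' theorem, the plan is to follow his construction closely. The decoder splits on the length of $\by\in\Sigma^{n*}$: if $|\by|=n$, no indel has occurred and ${\rm DEC}_{a,b}(\by)=\by$; if $|\by|=n-1$ a single symbol was deleted; if $|\by|=n+1$ a single symbol was inserted. I would treat the deletion case in full and deduce the insertion case from it, using the standard fact that a code correcting one deletion automatically corrects one insertion: if $\by$ is an insertion of $\bx$ then $\bx$ is a deletion of $\by$, and since ${\rm VT}_n(a,b)$ is an $(n,4n,\B)$-synthesis code there is a unique codeword obtainable from $\by$ by a single deletion (two such codewords would share $\by$ in their error balls).

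For the deletion case, write $\bx=x_1\cdots x_n\in{\rm VT}_n(a,b)$ and suppose $x_j$ was deleted to produce $\by$. I would recover $\bx$ in two steps. First the \emph{value} of the deleted symbol: from $\sum_i y_i=\big(\sum_i x_i\big)-x_j$ we get $x_j\equiv b-\sum_i y_i\pmod 4$, which pins down $x_j\in\Sigma$ via the shifted modulo. Second the \emph{position}: passing to the auxiliary sequence, deleting $x_j$ replaces the two bits $\tilde x_{j-1},\tilde x_j$ of $\tilde\bx$ by the single bit recording whether $x_{j-1}\le x_{j+1}$, so $\tilde\by$ differs from $\tilde\bx$ by a bounded, structured edit. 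Computing the syndrome deficiency $\delta\equiv a-{\rm VT}(\tilde\by)\pmod n$ and running the classical binary VT decoding rule on $\tilde\by$, then adjusting its output using the now-known value $x_j$ and the bits $\tilde x_{j-1},\tilde x_{j+1}$ (which remain readable in $\by$), locates the deleted position; reinserting $x_j$ there returns $\bx$. The insertion case is handled by the symmetric procedure: the inserted symbol has value $\equiv\big(\sum_i y_i\big)-b\pmod 4$, and its location follows from the syndrome surplus on $\tilde\by$ via the insertion-correcting form of the binary VT rule.

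In every case the computation amounts to a constant number of linear scans of $\by$ — to form $\sum_i y_i\bmod 4$ and ${\rm VT}(\tilde\by)\bmod n$, and to splice one symbol in or out — so ${\rm DEC}_{a,b}$ runs in $O(n)$ time.

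The step I expect to be the main obstacle is position recovery in the deletion case. The map $\bx\mapsto\tilde\bx$ is highly non-injective, and a single deletion in $\bx$ is \emph{not} a clean single deletion in $\tilde\bx$: across a plateau of equal consecutive symbols, or at either end of the word, deleting $x_j$ can erase one auxiliary bit while simultaneously flipping a neighbouring one. The substance of the argument is to show that this bounded corruption of $\tilde\bx$, together with the recovered residue $x_j\bmod 4$, still leaves a \emph{unique} preimage in ${\rm VT}_n(a,b)$, and that the VT syndrome $a$ is exactly the invariant that identifies its location. I would verify this by a finite case analysis over the order patterns of $(x_{j-1},x_j,x_{j+1})$ and the boundary cases $j\in\{1,n\}$, checking each time that the candidate (value, position) pair is consistent with precisely one codeword.
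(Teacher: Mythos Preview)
The paper does not give a proof of this theorem at all: it is stated with attribution to Tenengolts~\cite{Tenengolts1984} and used as a black box, so there is no ``paper's own proof'' to compare against. Your outline is essentially Tenengolts' original argument and is the right plan.

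That said, the obstacle you flag is not real, and seeing why is exactly the lemma that makes the whole construction work. A single deletion in $\bx$ \emph{always} induces a clean single-bit deletion in $\tilde\bx$; there is never a ``delete plus flip''. Indeed, deleting $x_j$ replaces the pair $\tilde x_{j-1}\tilde x_j$ by the single bit $[x_{j+1}\ge x_{j-1}]$. If $\tilde x_{j-1}\tilde x_j=11$ then $x_{j-1}\le x_j\le x_{j+1}$ forces the merged bit to be $1$, i.e.\ a $1$ was deleted; if $00$ then a $0$ was deleted; if $10$ the merged bit is either $1$ (so the $0$ at position $j$ was deleted) or $0$ (so the $1$ at position $j-1$ was deleted); the case $01$ is symmetric. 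The boundary cases $j\in\{1,n\}$ are immediate. Hence the binary VT syndrome $a$ locates the deleted bit of $\tilde\bx$ up to a run by the standard Levenshtein decoder, and within that run the now-known value $x_j$ (from $b$) together with $\by$ pins down $\bx$ uniquely, since the run in $\tilde\bx$ corresponds to a monotone segment of $\bx$. So the ``finite case analysis over order patterns'' you anticipate collapses to the four lines above.

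One small slip: $\tilde x_{j-1}=[x_j\ge x_{j-1}]$ involves the deleted symbol and is \emph{not} directly readable in $\by$; it is only recoverable once you have fixed the value $x_j$ and a candidate position. This does not damage the argument, but your parenthetical should be adjusted.
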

	
	Next, we give lower bound on the size of an $(n,T,\B)$-synthesis code. 
	\begin{definition}
		We define the set
		${\rm VT}_n(a,b,T)\triangleq \{\bx \in \Sigma^n : S(\bx) \leq T, {\rm VT}(\tilde{\bx})=a \bmod n,$ and $\sum_{i=1}^n{x_i}=b \bmod 4\}.$
	\end{definition}
	\begin{theorem}\label{thm:VTabT}
		For $n<T\leq 4n$, we have that ${\rm VT}_n(a,b,T)$ is an $(n,T,\B)$-synthesis code with size at least $A(n,T)/4n$.
	\end{theorem}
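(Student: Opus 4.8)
The plan is to separate the claim into two assertions: (a) ${\rm VT}_n(a,b,T)$ is an $(n,T,\B)$-synthesis code for every $a\in\mathbb{Z}_n$ and $b\in\mathbb{Z}_4$, and (b) for at least one pair $(a,b)$ its size is at least $A(n,T)/4n$.

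For (a), I would note that ${\rm VT}_n(a,b,T)\subseteq {\rm VT}_n(a,b)$ directly from the two definitions, and that ${\rm VT}_n(a,b)$ is an $(n,4n,\B)$-synthesis code by Tenengolts' theorem; in particular, its codewords have pairwise disjoint indel-balls, and this disjointness is trivially inherited by any subset, so $\B(\bx)\cap\B(\by)=\emptyset$ for all distinct $\bx,\by\in {\rm VT}_n(a,b,T)$. The requirement $S(\bx)\le T$ for every $\bx\in {\rm VT}_n(a,b,T)$ is built into the definition of the set. These two facts together are exactly the definition of an $(n,T,\B)$-synthesis code.

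For (b), I would argue by averaging over syndromes. Recall $W(n,T)$ is the largest $(n,T)$-synthesis code; since the set $\{\bx\in\Sigma^n : S(\bx)\le T\}$ is itself an $(n,T)$-synthesis code and every $(n,T)$-synthesis code is contained in it, we have $W(n,T)=\{\bx\in\Sigma^n : S(\bx)\le T\}$ and $|W(n,T)|=A(n,T)$. The syndrome map $\bx\mapsto\big({\rm VT}(\tilde{\bx})\bmod n,\ \sum_{i=1}^n x_i\bmod 4\big)$ sends $W(n,T)$ into $\mathbb{Z}_n\times\mathbb{Z}_4$, a set of size $4n$, and its fibre over $(a,b)$ is precisely ${\rm VT}_n(a,b,T)$. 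Hence the $4n$ sets ${\rm VT}_n(a,b,T)$ partition $W(n,T)$, and by the pigeonhole principle some pair $(a,b)$ satisfies $|{\rm VT}_n(a,b,T)|\ge |W(n,T)|/4n=A(n,T)/4n$.

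There is no substantial obstacle here; the only points requiring a little care are (i) invoking Tenengolts' theorem to obtain the disjoint-ball property, which is what makes the VT-structure do its work, and (ii) recognising that the full set of length-$n$ words with synthesis time at most $T$ is exactly the optimal $(n,T)$-synthesis code $W(n,T)$, so that the pigeonhole denominator divides $A(n,T)$ rather than something smaller. I would also remark that the size bound in the statement is to be read with an implicit existential quantifier over $(a,b)$, since the averaging argument only guarantees the existence of one good pair.
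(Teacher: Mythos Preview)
Your proposal is correct and follows essentially the same argument as the paper: inherit the disjoint-indel-ball property from ${\rm VT}_n(a,b)$ via the inclusion ${\rm VT}_n(a,b,T)\subseteq {\rm VT}_n(a,b)$, then apply pigeonhole over the $4n$ syndrome classes partitioning $W(n,T)$ to obtain a pair $(a,b)$ with $|{\rm VT}_n(a,b,T)|\ge A(n,T)/4n$. Your added care in identifying $W(n,T)$ explicitly and in flagging the implicit existential quantifier on $(a,b)$ is accurate and matches how the paper's own proof reads.
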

		
\begin{proof}
		Consider the set of all quaternary words of length $n$, whose synthesis time is at most $T$. We know that there are $A(n,T)$ such words. Note that ${\rm VT}_n(a,b,T)$ is  a subset of ${\rm VT}_n(a,b)$, therefore $B(\bx) \cap \B(\by)=\emptyset$ for any $\bx,\by \in {\rm VT}_n(a,b,T)$. Thus ${\rm VT}_n(a,b,T)$ is an $(n,T,B)$-synthesis code. There are $4n$ choices of parameters ($4$ choices of $b$ and $n$ choices of $a$), so by the pigeon hole principle, there exist a choice of $a$ and $b$ such that $VT_n(a, b, T)$ has size at least $A(n, T) / 4n$.
	\end{proof}
	
	\begin{corollary}\label{cor:VTabT}
		For $2.5 n \leq T \leq 4n$, there exists an $(n,T,\B)$-synthesis code with size at least $\frac{4^{n-1}}{2n}$. Hence the redundancy of the code is at most $3+ \log_2 n$ bits.
	\end{corollary}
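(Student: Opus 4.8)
The plan is to combine Theorem~\ref{thm:VTabT} with a lower bound on $A(n,T)$. Since that theorem already produces an $(n,T,\B)$-synthesis code of size at least $A(n,T)/(4n)$, it suffices to prove that $A(n,T)\ge 4^n/2$ for every $T$ with $2.5n\le T\le 4n$; substituting then gives a code of size at least $4^n/(8n)=4^{n-1}/(2n)$, and the redundancy bound is the one-line computation $2n-\log_2\frac{4^{n-1}}{2n}=2n-2(n-1)+\log_2(2n)=3+\log_2 n$ bits.

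To bound $A(n,T)$, recall that $D:\Sigma^n\to\Sigma^n$ is a bijection with $S(\bx)=\|D(\bx)\|$, so $A(n,T)$ is exactly the number of $\vy\in\Sigma^n$ with $\|\vy\|=\sum_{i=1}^n y_i\le T$. As $A(n,T)$ is nondecreasing in $T$, I only need the case $T=2.5n$, i.e.\ that at least half of all words in $\Sigma^n$ have $L_1$-norm at most $2.5n$. For this I would use the involution $\vy=(y_1,\dots,y_n)\mapsto\vy'=(5-y_1,\dots,5-y_n)$ on $\Sigma^n$, which satisfies $\|\vy'\|=5n-\|\vy\|$. Thus $\vy\mapsto\vy'$ restricts to a bijection between $\{\vy:\|\vy\|\le 2.5n\}$ and $\{\vy:\|\vy\|\ge 2.5n\}$, so these two sets have equal size; since they cover $\Sigma^n$, each has size at least $4^n/2$, giving $A(n,2.5n)\ge 4^n/2$.

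I do not expect a genuine obstacle here: the heart of the argument is this short symmetry/averaging step, and the rest is bookkeeping. The only subtlety worth flagging is the parity of $2.5n$: when $n$ is odd, $2.5n$ is not an integer, but since $\|\vy\|$ is integer-valued the two sets above are still complementary and partition $\Sigma^n$; when $n$ is even they overlap in $\{\vy:\|\vy\|=2.5n\}$, which only strengthens the bound. In both cases each set has size at least $4^n/2$, which is all that is needed, and then an application of Theorem~\ref{thm:VTabT} followed by the redundancy computation above completes the argument.
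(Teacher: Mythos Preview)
Your proposal is correct and is exactly the argument the paper has in mind: the corollary is stated without proof immediately after Theorem~\ref{thm:VTabT}, and the missing ingredient $A(n,2.5n)\ge 4^n/2$ is precisely the symmetry $\by\mapsto(5-y_1,\dots,5-y_n)$ on differential words, which the paper itself invokes in Step~\ref{state:3} of the special encoder ${\rm ENC}^\vB$. Your handling of the parity of $2.5n$ and the redundancy arithmetic are both fine.
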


	Now, Theorem~\ref{thm:VTabT} and Corollary~\ref{cor:VTabT} are existence results.
	In other words, a greedy method that constructs codes meeting the lower bound takes exponential time.
	Therefore, in this work, we provide explicit construction of codes whose rates are close to the ones promised in Theorem~\ref{thm:VTabT} and Corollary~\ref{cor:VTabT}.
	Specifically, we present several ways to encode binary messages to an $(n,T,\B)$-synthesis code. 
	The encoders are summarized in Figure~\ref{fig:encoders}.
	\begin{enumerate}[(A)]
		\item \textbf{Block encoder ${\rm ENC}^\vA$}. 
		The first idea is to encode binary messages to an $(n,T')$-synthesis code, where $T=T'+O(\log_2 n)$. 
		We do this by splitting the binary messages into $\ell$ blocks and encode each block using ranking/unranking to the set of all quaternary strings of length $k$ with synthesis time at most $T'/\ell$. Finally we concatenate all the blocks and encode it to an $(n,T,\B)$-synthesis code using a \textit{systematic single-indel correcting code}, ${\rm ENC}^\vH$.
		The whole process is presented as a \textit{single-indel correcting synthesis-constrained encoder}, ${\rm ENC}^\vA$, in Algorithm~\ref{alg:constrainedencoderquad} that runs in $O(n^3)$ time. 
		This is discussed in Subsection~\ref{subsec:generalencoder}.
		
		\item \textbf{Special encoder ${\rm ENC}^\vB$}. For the special case where $T\geq 2.5 n$, we borrow ideas from  \cite{Lenz2020} to provide another encoder, ${\rm ENC}^\vB$, that incurs smaller redundancy (as compared to ${\rm ENC}^{\vA}$). This encoder is given in Algorithm~\ref{alg:constrainedencoder} and it runs in $O(n)$ time.
		
		\item \textbf{Direct encoder ${\rm ENC}^\vC$}. Finally we provide a method that encodes binary messages directly to a ${\rm VT}_n(a,b,T)$ code using ranking/unranking algorithm for any general $T$. This encoder ${\rm ENC}^\vC$ is given in Algorithm~\ref{alg:directencoder} and has the best redundancy among other encoders. However, it runs in $O(n^5)$ time.
	\end{enumerate}
	
	We compare the information rates of the three encoders in Table~\ref{tab:threecomparison}.

	\begin{table}
		\centering
		\begin{tabular}{|c|c|c|c|}
			\hline & & &\\[-10pt]
			$\gamma$, where $T = \gamma n$ & ${\rm ENC}^{\vA}$ & ${\rm ENC}^{\vB}$ & ${\rm ENC}^{\vC}$ \\[1pt] \hline & & & \\[-11pt]
			1.1 & 0.000 & & 0.364 \\
			1.5 & 0.315 & & 0.841 \\
			1.9 & 0.646 & & 0.890 \\
			2.3 & 0.698 & & 0.821 \\
			2.7 & 0.665 & 0.691 & 0.714 \\
			3.1 & 0.594 & 0.602 & 0.622 \\
			3.5 & 0.526 & 0.533 & 0.551 \\
			3.9 & 0.472 & 0.479 & 0.495 \\
			4.0 & 0.461 & 0.467 & 0.482 \\ \hline
		\end{tabular}
	
	\vspace{2mm}
		\caption{Information rates for different encoding methods when $n = 127$. \vspace{-7mm}}
		\label{tab:threecomparison}
	\end{table}

	\section{Polynomial-Time Encoders}\label{sec:encoders}

	\subsection{Block encoder ${\rm ENC}^\vA$}\label{subsec:generalencoder}
	
	In 1984, Tenengolts \cite{Tenengolts1984} presented a systematic encoder that maps quaternary sequences to an $(n,4n,\B)$-synthesis code with at most $\ceil{\log_2 n} +12$ bits of redundancy. With a little modification of the encoder from Tenengolts, we provide a \textit{systematic single indel-correcting encoder} that runs in linear time with only $\ceil{\log_2 n} +6$ bits of redundancy. We remark that a non-systematic linear-time encoder with $\ceil{\log_2 n}+2$ redundant bits were given in~\cite{Cai2021}. 
	However, we need a systematic encoder for the purpose of this paper.
	
	\begin{definition}\label{def:indelencoder}
		The map ${\rm ENC} :\Sigma^m \xrightarrow{} \Sigma^n$ is a \textit{systematic single-indel correcting encoder} if there exists a \textit{decoder} map ${\rm DEC} : \Sigma^{n*} \xrightarrow{} \{0,1\}^m$ such that
		\begin{enumerate}[(i)]
			\item \label{condition1}For every $\bx \in \Sigma^m$, if $\bc={\rm ENC}(\bx)$ and $\by \in \B(\bc)$, then ${\rm DEC}(\by)=\bx$,
			\item \label{condition2}There exists an injective map $\omega : \{1,2,\ldots,m\} \xrightarrow{} \{1,2,\ldots,n\}$, such that for every $\bx \in \Sigma^m$, if ${\rm ENC}(\bx)=\bc$, then $x_i=c_{\omega(i)}$ for all $1\leq i \leq m$.
		\end{enumerate}
		This means that the code $\cC=\{{\rm ENC}(\bx): \bx \in \Sigma^m \}$ is a \textit{systematic single-indel correcting code}.
	\end{definition}
	
	In order to have an encoder from binary sequences to quaternary sequences, we use a natural bijection $\phi$ from quaternary alphabet $\Sigma=\{1,2,3,4\}$ to the following two-bit binary sequences:
	$$ 1 \longleftrightarrow 00, \quad 2 \longleftrightarrow 01,\quad  3 \longleftrightarrow 10, \quad 4 \longleftrightarrow 11.$$
	Then, we extend the mapping $\phi$ to map a quaternary sequence $\bx \in \Sigma^n$ to a binary sequence in $\{0,1\}^{2n}$. For example, $\phi(2314)=01100011$.
	
First, we present our \textit{systematic single-indel correcting encoder} in Algorithm \ref{alg:encoder}. The input of this encoder is a quaternary sequence of length $m=n-\ceil{\log_4 n} - 3$, where $n$ is the length of the output quaternary sequence. Thus the redundancy of the encoder in bits is $\ceil{\log_2 n} +6$ bits.
	
	\begin{algorithm}[!h]
		\caption{Systematic single-indel correcting encoder ${\rm ENC}^\vH$}\label{alg:encoder}
		\textbf{Input} $\quad \bx=x_1 x_2 \cdots x_m \in \Sigma^m$ \\
	\textbf{Output} $\bc=c_1 c_2 \cdots c_n={\rm ENC}^\vH(\bx) \in (n,4n,\B)$-synthesis code 
		\begin{algorithmic}[1]
			\STATE Set $c_1 \cdots c_{m}= \bx$.
			\STATE Set $c_{m+1} = c_{m+2}=c_m +2 \bmod 4$.
			\STATE Set $c_{m+3}=\sum_{i=1}^{m}{c_i} \bmod 4$.
			\STATE Set $c_{m+4} \cdots c_{n}$ to be the quaternary representation of ${\rm VT}(\tilde{\bx}) \bmod n$.
			\STATE \textbf{return} $\bc$.
		\end{algorithmic}
	\end{algorithm}
	
	For completeness, we also present our decoder map ${\rm DEC}^\vH:\Sigma^{n*} \xrightarrow{} \Sigma^m$ in Algorithm \ref{alg:decoder}.
	
	\begin{theorem}
		${\rm ENC}^\vH$ is a \textit{systematic single-indel correcting encoder} that maps quaternary sequences of length $m=n-\ceil{\log_4 n} - 3$ to a quaternary single-indel correcting correcting code of length $n$ with $\ceil{\log_2 n} +6$ bits of redundancy. Furthermore, the encoder runs in $O(n)$ time.
		
	\end{theorem}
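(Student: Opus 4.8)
The plan is to verify the three requirements for ${\rm ENC}^\vH$ to be a \textit{systematic single-indel correcting encoder}: the systematic property, the single-indel correction property (by exhibiting a decoder), and the length/redundancy bookkeeping together with the linear running time. The bookkeeping is immediate: the output consists of the $m$ systematic symbols $x_1\cdots x_m$, two marker symbols $c_{m+1}=c_{m+2}=c_m+2 \bmod 4$, one checksum symbol $c_{m+3}$, and $\ceil{\log_4 n}$ symbols encoding ${\rm VT}(\tilde\bx)\bmod n$ in base $4$; hence $n = m + 3 + \ceil{\log_4 n}$, giving redundancy $2(n-m) = 2\ceil{\log_4 n} + 6 = \ceil{\log_2 n} + 6$ bits (using $2\ceil{\log_4 n}=\ceil{\log_2 n}$, treating $n$ as a power of $4$ or absorbing the rounding; I would state the convention explicitly). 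The systematic property holds with $\omega$ the identity on $\{1,\dots,m\}$ by line~1 of Algorithm~\ref{alg:encoder}. Linear time is clear since each step is a single pass computing a sum or a base conversion.

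The substantive part is correctness of decoding. Given $\by\in\B(\bc)$, so $\by$ has length $n-1$, $n$, or $n+1$, the decoder must recover $\bx$. The key structural observation is that the suffix $c_{m+1}c_{m+2}c_{m+3}c_{m+4}\cdots c_n$ is a short ``protected'' region: the two equal symbols $c_{m+1}=c_{m+2}$ act as a separator flagging where the data block ends, and the redundancy symbols let us reconstruct ${\rm VT}(\tilde\bx)$ and $\sum_{i\le m} x_i$. I would argue by cases on whether the indel falls in the data portion $c_1\cdots c_m$ or in the protected suffix. If it falls in the suffix, the data portion $\bx$ is received intact, and we can locate it using the length of $\by$ together with the marker symbols (a single indel changes length by exactly one, and the doubled symbol $c_{m+1}=c_{m+2}$ survives in a recognizable form), so $\bx$ is read off directly. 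If the indel falls in $c_1\cdots c_m$, then the suffix is received intact, so we recover the intended values $a={\rm VT}(\tilde\bx)\bmod n$ and $b=\sum_{i\le m}x_i \bmod 4$ from $c_{m+3}$ and $c_{m+4}\cdots c_n$; then the corrupted prefix together with the (corrected) remainder of $\bc$ forms a word that, after restoring the tail, lies in a coset ${\rm VT}_?(a,b)$-type structure, and we invoke Tenengolts' linear-time single-indel decoder from Theorem~\ref{thm:indeldecoder} to recover $\bx$. The boundary subtlety — an indel at or adjacent to position $m$, which could ambiguously be attributed to the data or to the marker — is handled precisely by the design choice $c_{m+1}=c_{m+2}=c_m+2$: making the marker \emph{differ} from $x_m$ (shift by $2$, never $0$) ensures the run structure at the junction disambiguates the two cases.

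The main obstacle I anticipate is exactly this case-boundary analysis: making airtight the claim that the decoder can always correctly determine which region was hit and, when the data region was hit, that the reconstructed auxiliary sequence $\tilde\bx$ and sum constraints indeed pin down the original $\bx$ via Tenengolts' decoder — in particular checking that a single symbol indel in the quaternary data translates to a controlled change (at most one indel plus a bounded local substitution) in the auxiliary binary sequence $\tilde\bx$, which is the standard but delicate heart of quaternary VT decoding. I would lean on Tenengolts' original argument (Theorem~\ref{thm:indeldecoder}) for that inner step rather than reprove it, and present ${\rm DEC}^\vH$ in Algorithm~\ref{alg:decoder} with a short correctness justification for each branch. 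A secondary, minor obstacle is the rounding in $\ceil{\log_4 n}$ versus $\ceil{\log_2 n}$; I would simply fix the convention that $n$ is chosen so that $\ceil{\log_2 n}=2\ceil{\log_4 n}$ (e.g. $n$ even, or a power of $2$), or state the redundancy as ``at most $\ceil{\log_2 n}+6$'' and note the off-by-one is absorbed.
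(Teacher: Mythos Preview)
Your proposal is correct and follows essentially the same approach as the paper: present the explicit decoder ${\rm DEC}^\vH$ (Algorithm~\ref{alg:decoder}) and verify by cases on the received length and on whether the marker pair $c_{m+1}=c_{m+2}$ is intact, invoking Tenengolts' decoder (Theorem~\ref{thm:indeldecoder}) when the indel hits the data block. In fact the paper's own proof is simply the sentence ``We can verify easily that ${\rm ENC}^\vH$ satisfies the definition,'' so your write-up is considerably more detailed than what appears there; your flag about $2\ceil{\log_4 n}$ versus $\ceil{\log_2 n}$ is also a legitimate rounding issue that the paper glosses over.
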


	\begin{algorithm}[!h]
		\caption{Systematic single-indel correcting decoder ${\rm DEC}^\vH$}\label{alg:decoder}
		\textbf{Input} $\quad \bc \in \Sigma^{n*}$ \\
		\textbf{Output} $\bx \in \Sigma^m ={\rm DEC}^\vH(\bc)$
		\begin{algorithmic}[1]
			\STATE Let $\ell=|\bc|$ denote the length of the input $\bc$.
			\STATE If $\ell=n$, then we \textbf{return} $\bx=c_1 \cdots c_m$.
			\STATE If $\ell=n+1$, then a single insertion happened. If $c_{m+1}=c_{m+2}$, then there is no error in the data part, hence we output $\bx=c_1 \cdots c_m$. Otherwise, set $a=c_{m+4}$ and $b$ to be decimal representation of the quaternary word $c_{m+5} \cdots c_{n+1}$. Using Theorem \ref{thm:indeldecoder}, we \textbf{return} $\bx={\rm DEC}_{a,b}(c_1 \cdots c_{m+1})$
			\STATE If $\ell=n-1$, then a single deletion happened. If $c_{m}=c_{m+1}$, then set $a=c_{m+2}$ and $b$ to be decimal representation of the quaternary word $c_{m+3} \cdots c_{n-1}$. Using Theorem \ref{thm:indeldecoder}, we output $\bx={\rm DEC}_{a,b}(c_1 \cdots c_{m-1})$. Otherwise, we \textbf{return} $\bx=c_1 \cdots c_m$.
		\end{algorithmic}
	\end{algorithm}

We can verify easily that ${\rm ENC}^\vH$ satisfies the definition of \textit{systematic single-indel correcting encoder} in Definition \ref{def:indelencoder}.

Now, we present a \textit{synthesis-constrained encoder} which encodes binary messages to an $(n,T')$-synthesis code using ranking/unranking algorithm. Let $\cS$ be a finite set with cardinality $N$. A \textit{ranking function} for a set $\cS$ is a bijection $\text{rank} :\cS \xrightarrow{} \{1,2,\ldots,N\}$. There is a unique \textit{unranking function} associated with the function rank, which is $\text{unrank}:\{1,2,\ldots,N\} \xrightarrow{} \cS$, so that $\text{rank}(s)=i$ if and only if $\text{unrank}(i)=s$ for all $s \in \cS$ and $i \in \{1,2,\ldots,N\}$. Recall that $W(n,T)$ denote the set of all quaternary words of length $n$ whose synthesis time is at most $T$  and $A(n,T)$ is its size. We define the map
	\begin{equation*}
		\sigma^{n,k}_{T} : W(n-1,T-k) \xrightarrow{} W(n,T) ,
	\end{equation*}
	such that $$\sigma^{n,k}_{T}((x_1,x_2, \ldots, x_{n-1}))=(x_1,\ldots,x_{n-1},(x_{n-1}+k)\bmod 4),$$
	if $(x_1,x_2, \ldots, x_{n-1}) \in W(n-1,T-k)$.
	\begin{theorem}
		The map $\sigma^{n,k}_{T}$ is injective and
		\begin{equation}
			W(n,T)=\bigsqcup_{k=1}^4{\sigma^{n,k}_{T}(W(n-1,T-k))},    \label{recursion}
		\end{equation}
		where $\bigsqcup$ denotes the disjoint union of sets.
	\end{theorem}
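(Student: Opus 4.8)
The plan is to exploit the identity $S(\bx)=||D(\bx)||$ recorded in the preliminaries, together with the observation that \emph{every} coordinate of a differential word lies in $\{1,2,3,4\}$. The key elementary fact is that appending a symbol to a word of length $n-1$ lengthens its differential word by exactly one coordinate, namely the new last difference, and leaves the first $n-1$ coordinates unchanged. Precisely, for $\bx=(x_1,\dots,x_{n-1})$ and $k\in\{1,2,3,4\}$, writing $\bz=\sigma^{n,k}_T(\bx)=(x_1,\dots,x_{n-1},(x_{n-1}+k)\bmod 4)$, the shifted-modulo convention gives $(x_n-x_{n-1})\bmod 4=k$, so $D(\bz)$ is $D(\bx)$ with $k$ appended and therefore $S(\bz)=||D(\bz)||=||D(\bx)||+k=S(\bx)+k$. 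I would state and verify this small computation first, as everything else follows from it.

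First I would check that $\sigma^{n,k}_T$ is well defined as a map into $W(n,T)$: if $\bx\in W(n-1,T-k)$ then $S(\bx)\le T-k$, hence $S(\sigma^{n,k}_T(\bx))=S(\bx)+k\le T$, so the image lands in $W(n,T)$. Injectivity is immediate, since two words $\sigma^{n,k}_T(\bx)$ and $\sigma^{n,k}_T(\by)$ (for the \emph{same} $k$) agree in their first $n-1$ coordinates precisely when $\bx=\by$.

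Next I would prove the decomposition \eqref{recursion} in three short steps. (i) The inclusion $\bigcup_{k=1}^4\sigma^{n,k}_T(W(n-1,T-k))\subseteq W(n,T)$ is exactly the well-definedness just shown. (ii) For the reverse inclusion, take any $\bz=(z_1,\dots,z_n)\in W(n,T)$, set $k\triangleq(z_n-z_{n-1})\bmod 4\in\{1,2,3,4\}$ and $\bx\triangleq(z_1,\dots,z_{n-1})$; then $(z_{n-1}+k)\bmod 4=z_n$, so $\bz=\sigma^{n,k}_T(\bx)$, and $S(\bx)=S(\bz)-k\le T-k$ gives $\bx\in W(n-1,T-k)$. (iii) For disjointness, note that the image $\sigma^{n,k}_T(W(n-1,T-k))$ consists precisely of those words of $W(n,T)$ whose last differential coordinate equals $k$; since that coordinate is a function of the word, the four images are pairwise disjoint. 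This is the only place where the "disjoint" in $\bigsqcup$ gets used.

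I do not expect a genuine obstacle; the content is bookkeeping. The points needing care are: handling the shifted modulo operator correctly when identifying the last differential coordinate with $k$ (so that no spurious $\bmod 4$ reductions are lost), invoking the formula $S(\bx)=||D(\bx)||$ from the preliminaries rather than re-deriving it, and, if one wants the recursion to bottom out cleanly, interpreting the degenerate case $n=1$ (where $W(0,\cdot)$ is the singleton containing the empty word and $x_{0}$ is read as $0$, so that the appended symbol is $k\bmod 4$, consistent with the first-coordinate convention of $D$).
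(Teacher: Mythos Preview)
Your proposal is correct and follows essentially the same approach as the paper: both prove injectivity from the first $n-1$ coordinates, establish the key identity $S(\sigma^{n,k}_T(\bx))=S(\bx)+k$ via the differential word, obtain disjointness from the fact that the last differential coordinate $(z_n-z_{n-1})\bmod 4$ recovers $k$, and use the same extraction of $k$ and truncation to $(z_1,\dots,z_{n-1})$ for the reverse inclusion. Your write-up is slightly more explicit in isolating the ``append $k$ to the differential word'' observation up front, but the substance is identical.
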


	\begin{proof}
		First, we prove injectivity. If $\bx,\bx' \in W(n-1,T-k)$ where $\bx \neq \bx'$ then $\sigma^{n,k}_{T}(\bx) \neq \sigma^{n,k}_{T}(\bx')$ since the first $n-1$ positions of $\sigma^n_{T}(\bx)$ and $\sigma^n_{T}(\bx')$ are different.
		
		Next, we prove that $W(n,T)=\bigsqcup_{k=1}^4{\sigma^{n,k}_{T}(W(n-1,T-k))}$. If $\bx \in W(n-1,T-k)$ and $\bx' \in W(n-1,T-k')$, where $k \neq k'$ but $\bx$ and $\bx'$ are not necessarily distinct, then $\by=\sigma^{n,k}_{T}(\bx) \neq \sigma^{n,k'}_{T}(\bx')=\by'$ since $k=y_n - y_{n-1} \bmod 4 \neq y'_n - y'_{n-1}\bmod 4=k'$. Thus, the four sets are disjoint. Furthermore, note that for any $\bx \in W(n-1,T-k)$, we have $S(\sigma^{n,k}_{T}(\bx))=||D(\sigma^{n,k}_{T}(\bx))||=||D(\bx)||+k$. Thus for any $\bx \in W(n-1,T-k)$, we have $\sigma^{n,k}_{T}(\bx) \in W(n,T)$ and hence $\sigma^{n,k}_{T}(W(n-1,T-k)) \subset W(n,T)$. Next, suppose that $\by=(y_1,y_2,\ldots,y_n) \in W(n,T)$. Let $k=y_n - y_{n-1} \bmod 4$. Let $\bx=(y_1,y_2,\ldots,y_{n-1})$. Note that $\bx \in W(n-1,T-k)$, and $\sigma^{n,k}_{T}(\bx)=\by$. Thus, $W(n,T) \subset \bigsqcup_{k=1}^4{\sigma^{n,k}_{T}(W(n-1,T-k))}$.
	\end{proof}

	\begin{corollary}\label{cor:ranking}
		$ A(n,T)=\sum_{k=1}^{4}{A(n-1,T-k)}.$
	\end{corollary}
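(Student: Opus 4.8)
This is an immediate consequence of the decomposition \eqref{recursion} established in the preceding theorem, so the plan is simply to pass from sets to cardinalities. First I would recall that $A(n,T) = |W(n,T)|$ by definition, and that the theorem gives $W(n,T) = \bigsqcup_{k=1}^4 \sigma^{n,k}_T(W(n-1,T-k))$ as a \emph{disjoint} union. Taking cardinalities on both sides, disjointness yields $|W(n,T)| = \sum_{k=1}^4 |\sigma^{n,k}_T(W(n-1,T-k))|$.

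Next I would invoke the injectivity of $\sigma^{n,k}_T$, also proved in the theorem, to conclude that $|\sigma^{n,k}_T(W(n-1,T-k))| = |W(n-1,T-k)| = A(n-1,T-k)$ for each $k \in \{1,2,3,4\}$. Substituting this back gives $A(n,T) = \sum_{k=1}^4 A(n-1,T-k)$, which is the claimed recursion.

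The only points worth a remark are the degenerate ranges of the parameters: if $T-k < n-1$ then $W(n-1,T-k) = \emptyset$ and $A(n-1,T-k) = 0$, so such terms simply drop out of the sum and the identity still holds; similarly one can note the convention $A(0,T)=1$ for $T \geq 0$ to anchor the recursion. There is essentially no obstacle here — the entire content was already carried by the set-theoretic decomposition and the injectivity of $\sigma^{n,k}_T$ — so the corollary is a one-line bookkeeping step. I would keep the write-up correspondingly short.
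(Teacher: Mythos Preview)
Your proposal is correct and matches the paper's intent: the corollary is stated without proof in the paper precisely because it follows immediately from the preceding theorem by taking cardinalities of the disjoint union and using the injectivity of $\sigma^{n,k}_T$. Your write-up spells out exactly this one-line bookkeeping step.
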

	\begin{proposition}\label{prop:base}
		The base cases of the recursion in \eqref{recursion} are
		\begin{itemize}
			\item $W(1,T)=\{1,\ldots,T \}$, if $1\leq T \leq 4$,
			\item $W(1,T) =\{1,2,3,4 \}$, if $T>4$,
			\item $W(n,T) =\emptyset$, if $T\leq 0$ and $n\geq 1$.
		\end{itemize}
	\end{proposition}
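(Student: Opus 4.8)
The plan is to unwind the definitions of $W(n,T)$ and of the synthesis time directly; no appeal to the recursion \eqref{recursion} is needed for the base cases, and in fact these are the base cases that anchor that recursion.

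First I would dispose of the length-one cases. For $\bx = x_1 \in \Sigma$ we have $D(\bx) = (x_1 \bmod 4) = x_1$, since $x_1 \in \{1,2,3,4\}$ and the shifted modulo operator fixes these four values; hence $S(\bx) = \|D(\bx)\| = x_1$. Therefore $\bx \in W(1,T)$ if and only if $x_1 \le T$. When $1 \le T \le 4$ the set of such $x_1$ is precisely $\{1,2,\ldots,T\}$, and when $T > 4$ every element of $\Sigma$ qualifies, giving $\{1,2,3,4\}$. As a consistency check one can read off the same answer from the supersequence: $\mathbf{a}_T = 1234\cdots$ has exactly the symbols $1,\ldots,\min(T,4)$ as its length-one subsequences.

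Next I would handle $T \le 0$ with $n \ge 1$. The crucial point is that every coordinate of $D(\bx)$ lies in $\Sigma = \{1,2,3,4\}$ — this is exactly because the paper's modulo operator takes values in $\{1,2,3,4\}$ rather than $\{0,1,2,3\}$, so $D(\bx)$ has no zero entry. Consequently $S(\bx) = \|D(\bx)\| = \sum_{i=1}^{n} (D(\bx))_i \ge n \ge 1 > 0 \ge T$ for every $\bx \in \Sigma^n$, and so no length-$n$ word has synthesis time at most $T$; that is, $W(n,T) = \emptyset$.

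I do not anticipate a genuine obstacle. The only subtlety worth flagging explicitly is the convention on the shifted modulo operator, which guarantees that $D(\bx)$ never has a zero coordinate and hence forces $S(\bx) \ge n$; this is what makes the $T \le 0$ case vacuous. Everything else follows immediately from the identity $S(\bx) = \|D(\bx)\|$ established earlier in the paper.
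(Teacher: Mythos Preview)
Your argument is correct. The paper states Proposition~\ref{prop:base} without proof, treating the base cases as immediate from the definitions; your write-up simply makes explicit the computation $S(x_1)=\|D(x_1)\|=x_1$ for length-one words and the lower bound $S(\bx)\ge n$ coming from the shifted modulo convention, which is exactly the reasoning implicit in the paper.
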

	Thus from~\eqref{recursion} and Proposition~\ref{prop:base}, we can build $W(n,T)$ recursively from the base cases.
	
	\begin{example}\label{ex:ranking}
		We can construct $W(3,6)$ from $W(2,5), W(2,4),W(2,3)$ and $W(2,2)$ using the map $\sigma^{3,k}_6$. Then $W(2,5)$ can in turn be constructed from $W(1,4)=\{1,2,3,4\}$, $W(1,3)=\{1,2,3\}$, $W(1,2)=\{1,2\}$, and $W(1,1)=\{1\}$ using the map $\sigma^{2,k}_5$. Similarly for $W(2,4),W(2,3)$ and $W(2,2)$.
		
		Thus, we obtain 
		\begin{align*}
			W(2,5)&=\sigma^{2,1}_{5}(W(1,4)) \cup \sigma^{2,2}_{5}(W(1,3))\\
			&\cup \sigma^{2,3}_{5}(W(1,2)) \cup \sigma^{2,4}_{5}(W(1,1))\\
			&=\{12,23,34,41\} \cup \{13,24,31\} \cup \{14,21\}\cup \{11\}.
		\end{align*}
		\begin{align*}
			W(2,4)&=\sigma^{2,1}_{4}(W(1,3)) \cup \sigma^{2,2}_{4}(W(1,2))\cup \sigma^{2,3}_{4}(W(1,1)) \\
			&=\{12,23,34\} \cup \{13,24\} \cup \{14\}.
		\end{align*}
		\begin{align*}
			W(2,3)&=\sigma^{2,1}_{3}(W(1,2)) \cup \sigma^{2,2}_{3}(W(1,1))\\
			&=\{12,23\} \cup \{13\}.
		\end{align*}
		\begin{align*}
			W(2,2)&= \sigma^{2,1}_{2}(W(1,1)) \\
			&=\{12\}.
		\end{align*}
		And finally, we have
		\begin{align}
			W(3,6)&=\sigma^{3,1}_{6}(W(2,5)) \cup \sigma^{3,2}_{6}(W(2,4))\nonumber\\
			&\cup \sigma^{3,3}_{6}(W(2,3)) \cup \sigma^{3,4}_{6}(W(2,2)) \nonumber\\
			&=\{123,234,341,412,134,241,312,141,212,112\} \nonumber \\
			&\cup \{124,231,342,131,242,142\} \nonumber \\
			&\cup \{121,232,132 \}\ \cup \{122 \}\label{eq:w38}.
		\end{align}
	\end{example}

In fact, they give rise to a polynomial time algorithm, via dynamic programming, to build $W(n,T)$ and to calculate its size. In this paper, we focus
	on explaining the ideas behind the design of our ranking/unranking algorithms and not the optimization of the detailed running time analysis of these algorithms. The idea of ranking/unranking algorithm is the unfolding of the recursion given in \eqref{recursion}. We can follow the natural \textit{total ordering} of the elements in $W(n,T)$ where the elements in $\sigma^{n,k}_T(W(n-1,T-k))$ are before after the elements in $\sigma^{n,k'}_T(W(n-1,T-k'))$ if $k<k'$. 
	Our goal is to propose a polynomial time ranking/unranking algorithm for the set $W(n,T)$. The formal unranking algorithm is presented in Algorithm~\ref{alg:unranking}.
	
	\begin{algorithm}[!h]
		\caption{$\text{unrank}(j,n,T)$}\label{alg:unranking}
		\textbf{Input} Integers $j;n,T$, where $1 \leq j \leq A(n,T).$ \\
		\textbf{Output} The $j$-th element of $W(n,T)$.\\[-5mm]
		\begin{algorithmic}[1]
			\STATE If $n=1$ and $1 \leq j\leq T\leq 4$, then \textbf{return} $(j)$.
			\STATE If $n=1$, $T>4$ and $1 \leq j\leq 4$, then \textbf{return} $(j)$.
			\STATE Let $k \leq 4$ be the smallest positive integer such that $j \leq \sum_{i=1}^k{A(n-1,T-k)}$.
			\STATE Let $S=\sum_{i=1}^{k-1}{A(n-1,T-k)}$ if $k>1$, and let $S=0$ if $k=1$.
			\STATE \textbf{return} $\sigma^{n,k}_T(\text{unrank}(j-S;n-1,T-k))$
		\end{algorithmic}
	\end{algorithm}

	The corresponding ranking algorithm can be defined analogously in Algorithm~\ref{alg:ranking}.
	
	\begin{algorithm}[!h]
		\caption{$\text{rank}(\bx,n,T)$}\label{alg:ranking}
		\textbf{Input} Integers $n,T$ and string $\bx \in W(n,T)$.\\
		\textbf{Output} Integer $j$ s.t. $\bx$ is the $j$-th element of~$W(n,T)$.\\[-5mm]
		\begin{algorithmic}[1]
			\STATE If $n=1$ then \textbf{return} $\bx$.
			\STATE Let $k=x_n - x_{n-1} \bmod 4$.
			\STATE Let $S=\sum_{i=1}^{k-1}{A(n-1,T-i)}$ if $k>1$, and let $S=0$ if $k=1$.
			\STATE \textbf{return} $S+\text{rank}((x_1,x_2,\ldots,x_{n-1}),n-1,T-k)$.
		\end{algorithmic}
	\end{algorithm}

	Now, we are ready to present our \textit{single-indel correcting synthesis-constrained encoder} for general $T$. We assume that the input binary sequence $\bx$ consists of $\ell$ blocks, each of length $m$, where the $i$-th block is denoted by $\bx^{(i)}$, for $1\leq i \leq \ell$. Each block of $\bx^{(i)}$ is converted to its decimal representation $c^{(i)}$, which is then mapped to a quaternary string $\by^{(i)}$ of length $k$ in $(k,T'/\ell)$-synthesis code using $\text{unrank}$ function. Note that we must have $2^m \leq A(k,T'/\ell)$, and therefore we set $m =\floor{\log_2 A(k,T'/\ell)}$. Finally, we concatenate all the blocks $\by^{(i)}$'s and encode it to a quaternary string in $(n,T,\B)$-synthesis code using ${\rm ENC}^\vH$. Here $T=T'+\epsilon$, where $\epsilon=4(\ceil{\log_4 n}+\ell+3)$, where the additional $4\ell$ might come from the concatenated parts.
	Therefore, if the length of the output quaternary sequence is $n$, then $\ell k=n - \log_4 n- 3$. Thus the redundancy of the encoder in bits is $2n -\ell m=2n - \ell \floor{\log_2 A(\frac{n - \log_4 n- 3}{\ell},\frac{T-\epsilon}{\ell})}$ bits. The formal encoder is described in Algorithm~\ref{alg:constrainedencoderquad} and the result is summarized in Theorem~\ref{thm:generalencoder}.
	
	\begin{algorithm}[!h]
		\caption{Single-indel correcting synthesis-constrained encoder ${\rm ENC}^\vA(\ell,m,\bx;n,T)$}\label{alg:constrainedencoderquad}
		\textbf{Input} $\bx=x_1 x_2 \cdots x_{\ell m} \in \{0,1\}^{\ell m}$ such that $2^m \leq A(\frac{n - \log_4 n- 3}{\ell},\frac{T-\epsilon}{\ell})$, where $\epsilon=4(\ceil{\log_4 n}+\ell+3)$.\\
		\textbf{Output} $\bc=c_1 c_2 \cdots c_{ n} ={\rm ENC}^\vA(\ell,m,\bx;n,T) \in (n,T,\B)$-synthesis code 
		\begin{algorithmic}[1]
			\STATE Set $\bx^{(t)}=x_{m(t-1)+1}, x_{m(t-1)+2},\ldots,x_{m(t-1)+m}$, for all $1\leq t \leq \ell$ which represents the $t$-th block of $\bx$.
			\STATE Let $c^{(t)}=\sum_{i=1}^m {\bx^{(t)}_i 2^{\ell-i}}$ be the decimal representation of the binary string $\bx^{(t)}$.
			\STATE Let $\by^{(t)}=\text{unrank}(c^{(t)};\frac{n - \log_4 n- 3}{\ell},\frac{T-\epsilon}{\ell})$.
			\STATE Set $\by=\by^{(1)}\by^{(2)}\cdots \by^{(\ell)} \in \Sigma^{n - \log_4 n- 3}$.
			\STATE Set $\bc= {\rm ENC}^\vH (\by)\in \Sigma^n$ and \textbf{return} $\bc$.
		\end{algorithmic}
	\end{algorithm}
	\begin{theorem}\label{thm:generalencoder}
		The encoder ${\rm ENC}^\vA$ is a \textit{single-indel correcting synthesis-constrained encoder} that maps a binary string $\bx$ of length $\ell m$ to a quaternary string of length $n$ with synthesis time at most $T$. This encoder runs in $O(n^3)$ time with $2n - \ell \floor{\log_2 A(\frac{n - \log_4 n- 3}{\ell},\frac{T-\epsilon}{\ell})}$ bits of redundancy, where $\epsilon=4(\ceil{\log_4 n}+\ell+3)$. Thus the code $\cC=\{{\rm ENC}^\vA(\ell,m,\bx;n,T): \bx \in \{0,1\}^{\ell m \}}$ is an $(n,T,\B)$-synthesis code for a fixed $n<T \leq 4n$.
	\end{theorem}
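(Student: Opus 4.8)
The plan is to verify the three assertions of Theorem~\ref{thm:generalencoder} in turn: correctness (it is a single-indel correcting synthesis-constrained encoder in the sense of Definition~\ref{def:constrainedencoder}), the synthesis-time bound $S(\bc)\le T$, and the running time and redundancy counts. Correctness is almost free: steps 1--4 of Algorithm~\ref{alg:constrainedencoderquad} form a bijection from $\{0,1\}^{\ell m}$ onto a subset of $\Sigma^{n-\log_4 n -3}$ (each block map $c^{(t)}\mapsto\by^{(t)}$ is injective because $\text{unrank}$ is a bijection onto $W(k,T'/\ell)$ and $c^{(t)}<2^m\le A(k,T'/\ell)$, and concatenation of fixed-length blocks is injective), and then $\bc={\rm ENC}^\vH(\by)$ is produced by a \emph{systematic single-indel correcting encoder}. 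So given $\by\in\B(\bc)$, the decoder ${\rm DEC}^\vH$ recovers $\by$, we strip off the $\log_4 n+3$ redundant symbols of ${\rm ENC}^\vH$, split into $\ell$ blocks, apply $\text{rank}$ to each to recover $c^{(t)}$, and read off the bits. Thus ${\rm DEC}^\vA={\rm (unrank\ inverse)}\circ{\rm DEC}^\vH$ works, establishing condition (i); condition (ii) is what the next paragraph handles.

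For the synthesis-time bound I would argue additively. Each $\by^{(t)}$ lies in $W(k,T'/\ell)$ with $k=(n-\log_4 n-3)/\ell$, so $S(\by^{(t)})\le T'/\ell$, i.e. $\|D(\by^{(t)})\|\le T'/\ell$. When we concatenate $\by^{(1)}\cdots\by^{(\ell)}$, the differential word of the concatenation agrees with the concatenation of the differential words except at the $\ell-1$ junctions, where the entry $y^{(t+1)}_1 - y^{(t)}_k \bmod 4$ replaces $y^{(t+1)}_1 \bmod 4$; since each such entry is at most $4$, the junctions contribute at most $4(\ell-1)$ extra to the $L_1$-norm, so $S(\by)\le T' + 4\ell$. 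Finally ${\rm ENC}^\vH$ appends $\log_4 n+3$ quaternary symbols, each contributing at most $4$ to $\|D(\cdot)\|$, adding at most $4(\log_4 n+3)=4(\ceil{\log_4 n}+3)$. Summing, $S(\bc)\le T' + 4\ell + 4(\ceil{\log_4 n}+3) = T'+\epsilon = T$, which is exactly condition (ii). This bookkeeping of where the $4\ell$ slack comes from (flagged informally before the algorithm) is the one place to be careful, so it is the main obstacle — everything else is routine.

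For the complexity and redundancy claims: the dominant cost is building the table of values $A(n',T')$ for all $n'\le n$, $T'\le 4n$ via Corollary~\ref{cor:ranking}, which is $O(n\cdot 4n)=O(n^2)$ additions of integers of $O(n)$ bits, so $O(n^3)$ time; each $\text{unrank}$ call and the linear-time ${\rm ENC}^\vH$ are then dominated by this, giving the stated $O(n^3)$. The redundancy is $2n$ (the bit-length of a generic quaternary string of length $n$) minus $\ell m$, the number of message bits, and since $m=\floor{\log_2 A(k,T'/\ell)}$ with $k=(n-\log_4 n-3)/\ell$ and $T'/\ell=(T-\epsilon)/\ell$, this is $2n-\ell\floor{\log_2 A(\frac{n-\log_4 n-3}{\ell},\frac{T-\epsilon}{\ell})}$ bits, as claimed; that the resulting family is an $(n,T,\B)$-synthesis code is then immediate from Definition~\ref{def:constrainedencoder}. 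I would present this as three short lemmas-within-the-proof in the order: (a) injectivity and decodability, (b) synthesis-time, (c) counts.
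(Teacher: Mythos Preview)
Your proposal is correct and follows essentially the same approach as the paper: the paper's proof focuses solely on the synthesis-time bound via the same additivity argument (each block contributes at most $(T-\epsilon)/\ell$, concatenation adds at most $4\ell$, and the ${\rm ENC}^\vH$ suffix adds at most $4(\ceil{\log_4 n}+3)$), while leaving the decodability, running-time, and redundancy claims implicit from the preceding discussion. Your treatment is simply more thorough in spelling out (a) and (c), which the paper takes for granted.
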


	\begin{proof}
		We want to show that the synthesis time of the output $\bc$ of encoder ${\rm ENC}^\vA$ is indeed at most $T$. Note that the synthesis time of each $\by^{(t)}$ is at most $\frac{T-\epsilon}{\ell}$. Thus the synthesis time of $\by$ is at most $T-\epsilon + 4 \ell$, where the additional $4\ell$ synthesis time may come from the concatenated parts between each block. Recall that $\bc= {\rm ENC}^\vH (\by)$ and ${\rm ENC}^\vH$ is a systematic encoder. Furthermore, the encoder appends a quaternary substring of length $\ceil{\log_4 n}+3$, which increases the synthesis time by at most $4(\ceil{\log_4 n}+3)$. Thus $S(\bc) \leq T- \epsilon +4\ell +4(\ceil{\log_4 n}+3) = T$.
	\end{proof}

	The comparison of the information rates for the encoder ${\rm ENC}^\vA$ for different values of block lengths $k$ with fixed output length $n=127$ is given in Figure~\ref{fig:algo1}.
	
	For completeness, we also provide our decoder map ${\rm DEC}^\vA:\Sigma^{n*} \xrightarrow{} \Sigma^m$ in Algorithm \ref{alg:decoderA}.
	
	\begin{algorithm}[!h]
		\caption{Single-indel correcting synthesis-constrained decoder ${\rm DEC}^\vA(\ell,m;n,T,\bc)$}\label{alg:decoderA}
		\textbf{Input} $\bc \in \Sigma^{n*}.$\\
		\textbf{Output} $\bx=x_1 x_2 \cdots x_{ \ell m} ={\rm DEC}^\vA(\ell,m;n,T,\bc) $.
		\begin{algorithmic}[1]
			\STATE Set $\by={\rm DEC}^\vH(\bc) \in \Sigma^{n - \log_4 n -3}$.
			\STATE Split $\by$ into $\ell$ blocks of equal length, such that $\by=\by^{(1)}\by^{(2)}\cdots \by^{(\ell)}$.
			\STATE Let $c^{(t)}=\text{rank}(\by^{(t)},\frac{n - \log_4 n- 3}{\ell},\frac{T-\epsilon}{\ell})$, where $\epsilon=4(\ceil{\log_4 n}+\ell+3)$, for all $1 \leq t \leq \ell$.
			\STATE Let $\bx^{(t)}\in \{0,1\}^m$ be the binary representation of length $m$ of $c^{(t)}$.
			\STATE Set $\bx$ to be the concatenation of $\bx^{(1)}\bx^{(2)} \cdots \bx^{(\ell)} \in \{0,1\}^{\ell m}$ and \textbf{return} $\bx$.
		\end{algorithmic}
	\end{algorithm}

	\begin{figure}
		\label{fig:algo1}
		\centering
		\includegraphics[width=10cm]{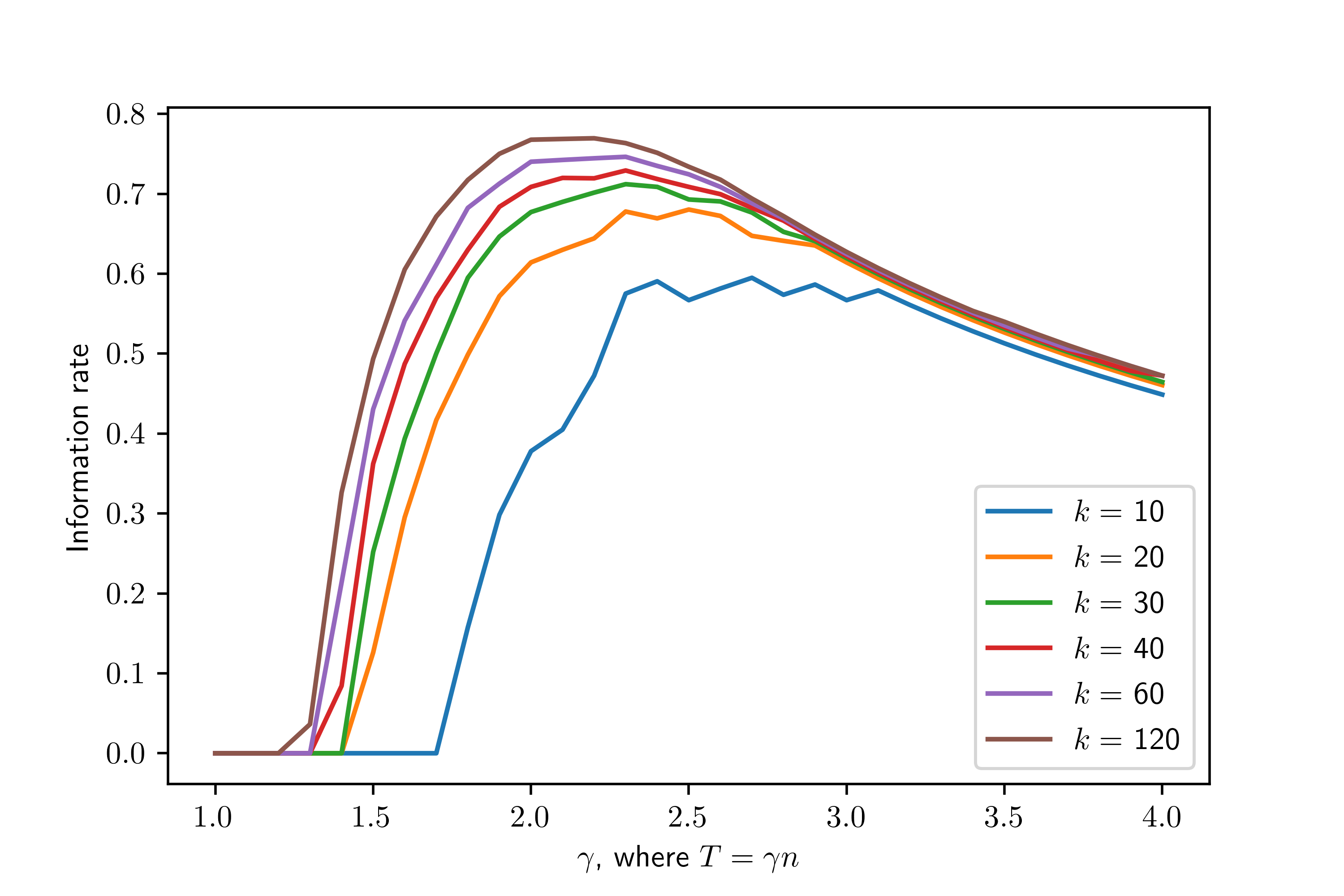}
		\caption{Information rate of block encoder ${\rm ENC}^\vA$ for $n = 127$ with multiple block size values \vspace{-3mm}}
		\label{fig:algo1}
	\end{figure}
	\subsection{Special encoder ${\rm ENC}^\vB$ for $T \geq 2.5 n$.}
	\label{subsec:special} 
	For the special case of $T \geq 2.5 n$, we only need two bits of redundancy when we encode the binary messages to an $(n,T')$-synthesis code, where $T=T'+O(\log n)$. And then similar to the general encoder presented in the previous subsection, we use \textit{systematic single-indel correcting encoder}, where the extra redundant bits are again encoded to minimize the synthesis time. This special encoder ${\rm ENC}^\vB$ is presented in Algorithm \ref{alg:constrainedencoder}. For simplicity, we assume that the length of the input sequence is even. The input of this encoder is a binary sequence of length $2m=2n-2\ceil{\log_4 n} - 10$, where $n$ is the length of the output quaternary sequence. Thus the redundancy of the encoder ${\rm ENC}^\vB$ is $\ceil{\log_2 n} +10$ bits.
	
	\begin{algorithm}[!h]
		\caption{Single-indel correcting synthesis-constrained encoder ${\rm ENC}^\vB(m,\bx;n)$ for $T \geq 2.5n$}\label{alg:constrainedencoder}
		\textbf{Input} $\quad \bx=x_1 x_2 \ldots x_{2m} \in \{0,1\}^{2m}$ \\
			\textbf{Output} $\bc=c_1 c_2 \ldots c_n={\rm ENC}^\vB(m,\bx;n) \in (n,T,\B)$-synthesis code 
		\begin{algorithmic}[1]
			\STATE Set $\by=\phi^{-1}(\bx)$ to be a quaternary word of length $m$.
			\STATE Let $\by' \in \Sigma^{m+1}$ such that $y'_1 \cdots y'_m=D(\by)$ is the \textit{differential word} of $\by$, and $y'_{m+1}=1$.
			\STATE \label{state:3}If $||\by'||> 2.5 (m+1)$, then for all $1 \leq i \leq m+1$, replace $y_i$ with $5 - y_i$.
			\STATE \label{state:4}Set $\bz= {\rm ENC}^\vH (D^{-1}(\by'))$, where $z_1 \cdots z_{m+1}=D^{-1}(\by')$ is the systematic part of the encoding and $z_{m+2} \cdots z_{n-1}$ is the suffix of $\bz$.
			\STATE Let $\bz'$ be a sequence of length $n$ where $z'_1 \cdots z'_{n-1}=D(\bz)$, and $z'_n =1$.
			\STATE If $||z'_{m+2} \cdots z'_{n}||>2.5 (n-m-1)$, then replace $z'_i$ with $5-z'_i$ for all $m+2 \leq i \leq n$. \label{state:6}
			\STATE Let $\bc=D^{-1}(\bz')$ and \textbf{return} $\bc$.
		\end{algorithmic}
	\end{algorithm}

	For completeness, we also provide the nontrivial decoder ${\rm DEC}^\vB$ for ${\rm ENC}^\vB$ in Algorithm \ref{alg:constraineddecoder}.
	
	\begin{algorithm}[!h]
		\caption{Single-indel correcting synthesis-constrained decoder ${\rm DEC}^\vB(m;n,\bc)$ for $T \geq 2.5n$}\label{alg:constraineddecoder}
		\textbf{Input} $\quad \bc \in \Sigma^{n*}$.\\
		\textbf{Output} $\bx \in \{0,1\}^{2m}$
		\begin{algorithmic}[1]
			\STATE Let $\ell=|\bc|$ denote the length of the input $\bc$.
			\STATE If $\ell=n+1$ and $c_{m+2}\neq c_{m+3}$, then the insertion happened at the first $m+3$ positions. Set $\bc'=D(\bc)$. If $c'_{n+1}=4$, replace $c'_i$ with $5-c'_i$ for all $m+5 \leq i \leq n+1$ and set $\bz=D^{-1}(c'_1 c'_2 \cdots c'_{n})$. Set $a=z_{m+5}$ and $b$ to be decimal representation of the quaternary word $z_{m+6} \cdots z_{n}$. Using Theorem \ref{thm:indeldecoder}, we compute $\by={\rm DEC}_{a,b}(z_1 \cdots z_{m+2})$.
			\STATE If $\ell=n+1$ and $c_{m+2}= c_{m+3}$, then there is no error in the first $m+1$ positions. Set $\by=c_1 c_2 \cdots c_{m+1}$.
			\STATE If $\ell=n-1$ and $c_{m+1}=c_{m+2}$, then the deletion happened at the first $m+1$ positions. Set $\bc'=D(\bc)$. If $c'_{n-1}=4$, replace $c'_i$ with $5-c'_i$ for all $m+1 \leq i \leq n-1$ and set $\bz=D^{-1}(c'_1 c'_2 \cdots c'_{n-2})$. Set $a=z_{m+3}$ and $b$ to be decimal representation of the quaternary word $z_{m+4} \cdots z_{n-2}$. We compute $\by={\rm DEC}_{a,b}(z_1 \cdots z_{m})$.
			\STATE If $\ell=n-1$ and $c_{m+1}\neq c_{m+2}$, then there is no error in the first $m+1$ positions. Set $\by=c_1 c_2 \cdots c_{m+1}$.
			\STATE If $\ell=n$, then set $\by=c_1 c_2 \cdots c_{m+1}$.
			\STATE Set $\by'=D(\by)$. If $y'_{m+1}=4$, replace $y'_i$ with $5- y'_i$ for all $1\leq i \leq m$. Set $\bx=\phi(D^{-1}(y'_1 y'_2 \cdots y'_m))$ and \textbf{return} $\bx$.
		\end{algorithmic}
	\end{algorithm}

The result is formalized in Theorem \ref{thm:constrainedencoder2.5}.
	\begin{theorem}\label{thm:constrainedencoder2.5}
		The encoder ${\rm ENC}^\vB$ is a \textit{single-indel correcting synthesis-constrained encoder} that maps a binary string $\bx$ of length $2m$ to a quaternary string of length $n$ with synthesis time at most $2.5 n$. This encoder runs in linear time with $10+ \ceil{\log_2 n}$ bits of redundancy. Thus the code $\cC=\{{\rm ENC}^\vB(m,\bx;n): \bx \in \{0,1\}^{2m} \}$ is an $(n,T,\B)$-synthesis code for $T \geq 2.5 n$.
	\end{theorem}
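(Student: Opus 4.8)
The plan is to verify the four assertions of Theorem~\ref{thm:constrainedencoder2.5} — output length and synthesis time at most $2.5n$, linear running time, redundancy $\ceil{\log_2 n}+10$, and the single-indel correction property — one at a time, the first and last being the only ones with real content. The redundancy is immediate: the input has $2m=2n-2\ceil{\log_4 n}-10$ bits and the output is a word of $\Sigma^n$, so the redundancy is $2n-2m=2\ceil{\log_4 n}+10=\ceil{\log_2 n}+10$ bits. Linearity is also immediate: every line of Algorithm~\ref{alg:constrainedencoder} is a single left-to-right scan (applying $\phi^{-1}$, computing $D$ or $D^{-1}$, computing an $L_1$-norm, or a coordinatewise complement $a\mapsto 5-a$) together with one call to the linear-time encoder ${\rm ENC}^\vH$; the decoder ${\rm DEC}^\vB$ of Algorithm~\ref{alg:constraineddecoder} likewise uses only such scans and one call to the linear-time Tenengolts decoder ${\rm DEC}_{a,b}$ of Theorem~\ref{thm:indeldecoder}.

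\textbf{Synthesis time.} Write $\bz={\rm ENC}^\vH(D^{-1}(\by'))$ and $\bc=D^{-1}(\bz')$ as in the algorithm; then $S(\bc)=||D(\bc)||=||\bz'||$. Because ${\rm ENC}^\vH$ is systematic, the first $m+1$ symbols of $\bz$ equal the input $D^{-1}(\by')$, so $D(\bz)_1\cdots D(\bz)_{m+1}=\by'$; since Step~5 copies $D(\bz)$ into $z'_1\cdots z'_{n-1}$ while Step~6 modifies only coordinates $m+2,\dots,n$, we get $z'_1\cdots z'_{m+1}=\by'$ and hence $||\bz'||=||\by'||+||z'_{m+2}\cdots z'_n||$. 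The elementary observation driving the bound is that for any word $w=w_1\cdots w_L\in\Sigma^L$, its complement $\overline{w}$ with $\overline{w}_i=5-w_i$ has $||\overline{w}||=5L-||w||$, so $\min(||w||,||\overline{w}||)\le 2.5L$. Steps~3 and~6 apply exactly this to the length-$(m+1)$ block $\by'$ and the length-$(n-m-1)$ block $z'_{m+2}\cdots z'_n$, forcing $||\by'||\le 2.5(m+1)$ and $||z'_{m+2}\cdots z'_n||\le 2.5(n-m-1)$; adding these yields $S(\bc)=||\bz'||\le 2.5(m+1)+2.5(n-m-1)=2.5n$.

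\textbf{Single-indel correction.} Here one must show that ${\rm DEC}^\vB$ recovers $\bx$ from any $\by\in\B(\bc)$. The received length $|\by|\in\{n-1,n,n+1\}$ already distinguishes deletion, no error, and insertion. The transformation $\bc=D^{-1}(\bz')$ is designed so that: (i) the first $m+1$ symbols of $\bc$ equal $D^{-1}(\by')$, since Step~6 touches only positions $\ge m+2$, so whenever the error lies strictly after the data block one reads $\by'$ — and from the sentinel $y'_{m+1}$ whether the Step~3 complement was applied — hence $\by$ and $\bx$ — off the received prefix directly; (ii) passing to the differential domain turns a single indel on $\bc$ into the merging (for a deletion) or splitting (for an insertion) of two adjacent coordinates of $\bz'$, which after deleting the appended coordinate $z'_n$ and applying $D^{-1}$ becomes exactly a single indel on $\bz$; and (iii) the two sentinels $y'_{m+1}$ and $z'_n$, both set to $1$ and flipped to $4$ exactly when the Step~3, respectively Step~6, complement is applied, let the decoder detect and undo each complement even from the corrupted word. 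The equal pair $z_{m+2}=z_{m+3}$ created by ${\rm ENC}^\vH$ serves as a marker localizing the indel to the data side or the redundancy side. On the redundancy side $\bx$ is recovered as in (i). On the data side ${\rm DEC}^\vB$ reconstructs the single-indel-corrupted copy of $\bz$ given by (ii), reads the Tenengolts parameters $a,b$ off the intact tail, and applies ${\rm DEC}_{a,b}$ of Theorem~\ref{thm:indeldecoder} to recover $\bz$, hence $D^{-1}(\by')$, hence $\bx$. In every case ${\rm DEC}^\vB(\by)=\bx$, so $\cC=\{{\rm ENC}^\vB(m,\bx;n):\bx\in\{0,1\}^{2m}\}$ is an $(n,T,\B)$-synthesis code for $T\ge 2.5n$.

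\textbf{Main obstacle.} The substantive step is the case analysis behind (ii) and (iii): one has to check that for \emph{every} position of the indel — in particular one adjacent to the marker coordinates $m+2,m+3$ or one hitting a sentinel — the induced change on $\bz$ is a genuine single indel rather than a compound error, and that the received length together with the marker always correctly resolves the data/redundancy dichotomy. This amounts to careful index bookkeeping across the four lengths $m$, $m+1$, $n-1$, $n$ of $\by$, $\by'$, $\bz$, $\bc$, plus a verification that the Step~3 and Step~6 complements stay independently detectable after corruption. Everything else — the norm accounting for the synthesis bound, the redundancy count, and the $O(n)$ time — is routine once the systematic, linear-time guarantees of ${\rm ENC}^\vH$ and ${\rm DEC}_{a,b}$ are in hand.
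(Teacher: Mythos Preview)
Your proposal is correct and follows essentially the same approach as the paper: the synthesis-time bound via the complement trick $||w||+||\overline w||=5L$ applied separately to the two blocks of $\bz'$ is exactly the paper's argument, just written out in more detail. Where the paper simply asserts ``we can verify that ${\rm ENC}^\vB$ satisfies both conditions in Definition~\ref{def:constrainedencoder}'', you supply a sketch of why the decoder works (the sentinel/marker mechanism and the reduction to a single indel on $\bz$), which is helpful elaboration but not a different method.
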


	\begin{proof}
		Let $\bx \in \{0,1\}^{2m}$ be the input sequence. Note that $\by=\phi^{-1}(\bx)$ is the quaternary representation of the input binary string. Observe that Step \ref{state:3} guarantees that the synthesis time of $D^{-1}(\by')$ is $||D(D^{-1}(\by'))||=||\by'|| \leq 2.5 (m+1)$.

		Furthermore, Step~\ref{state:6} guarantees that the suffix of $\bz$ after the systematic part of $z$ in Step~\ref{state:4} also has synthesis time at most $2.5 (n-m-1)$. Thus the total synthesis time for the systematic part and the suffix of $\bz$ after Step~\ref{state:6} is at most $2.5 n$.
		Finally, we can verify that ${\rm ENC}^\vB$ satisfies both conditions in Definition~\ref{def:constrainedencoder}.
	\end{proof}

\subsection{Direct encoder ${\rm ENC}^\vC$} \label{subsec:directencoder}
	Finally, in this subsection, we provide ranking/unranking algorithm that encodes binary messages directly into a ${\rm VT}_n(a,b,T)$ code for any $n<T\leq 4n$. This encoder should give the best redundancy among other encoders, however the drawback is that this encoder takes longer time to run compared to the previous encoders. Similar to Subsection~\ref{subsec:generalencoder}, we need to introduce the following recursion.
	
	Let $W_n(\ell,T,a,b,\alpha)$ denote the set of all quaternary words $\bx=x_1 x_2 \cdots x_\ell$ of length $\ell$, such that $x_\ell=\alpha$, $S(\bx)\leq T$ and ${\rm VT}(\tilde{\bx})=a \bmod n, \text{ and } \sum_{i=1}^\ell{\bx_i}=b \bmod 4$. Note that $b$ and $\alpha$ are always in $\{1,2,3,4\}$, namely the shifted modulo $4$, and $a$ is always \text{mod} $n$, where $n$ is fixed in the beginning. We have the following recursion.
	\begin{align}
		&W_n(\ell,T,a,b,\alpha) \nonumber \\
		&= \bigsqcup_{i=1}^{4} W_n(\ell-1,T-c,a- \mathbbm{1}_{\alpha \geq i} (\ell-1),b-\alpha,i ) \circ \alpha ,\label{eq:recdirect}
	\end{align}
	where $c=\alpha-i \bmod 4$ and $\ell \geq 2$.
	For example, we have $W_n(\ell,T,a,b,3) = W_n(\ell-1,T-2,a-\ell+1,b-3,1 ) \circ 3  \sqcup W_n(\ell-1,T-1,a  -\ell+1,b-3,2)\circ 3\sqcup W_n(\ell-1,T-4,a -\ell+1,b-3,3)\circ 3\sqcup W_n(\ell-1,T-3,a,b-3,4 )\circ 3$.
	\begin{proposition}\label{prop:basedirect}
		The base cases of the recursion in \eqref{eq:recdirect} are
		\begin{itemize}
			\item $W_n(1,T,a,b,\alpha)=\emptyset$, if $a \neq 0$ or $\alpha \neq b$ or $\alpha>T$,
			\item $W_n(1,T,0,\alpha,\alpha)=\{\alpha\}$, if $\alpha \leq T$
		\end{itemize}
	\end{proposition}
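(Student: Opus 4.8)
The plan is to unpack directly the conditions defining $W_n(1,T,a,b,\alpha)$ in the case $\ell = 1$, since for length-one words there is at most one candidate. First I would observe that the requirement $x_\ell = \alpha$ with $\ell = 1$ forces $x_1 = \alpha$, so the single-symbol word $(\alpha)$ is the only word that can possibly lie in $W_n(1,T,a,b,\alpha)$; consequently this set is either $\emptyset$ or $\{\alpha\}$, and everything reduces to deciding exactly when $(\alpha)$ meets the three remaining constraints.

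Next I would check each of those constraints on $\bx = (\alpha)$. The auxiliary binary sequence $\tilde{\bx}$ of a length-one word has length $\ell - 1 = 0$, hence is the empty word, so ${\rm VT}(\tilde{\bx})$ is the empty sum $0$; therefore the syndrome condition ${\rm VT}(\tilde{\bx}) = a \bmod n$ holds if and only if $a = 0$ in $\mathbb{Z}_n$. The weight condition $\sum_{i=1}^{1} x_i = b \bmod 4$ becomes simply $\alpha = b$, using that both $\alpha$ and $b$ lie in $\{1,2,3,4\}$ under the shifted modulo. Finally, invoking the identity $S(\bx) = \| D(\bx) \|$ recorded in the preliminaries, we get $S((\alpha)) = \| (\alpha \bmod 4) \| = \alpha$, so $S((\alpha)) \le T$ if and only if $\alpha \le T$.

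Combining the three equivalences, $(\alpha) \in W_n(1,T,a,b,\alpha)$ precisely when $a = 0$, $b = \alpha$, and $\alpha \le T$ all hold. This gives both assertions of the proposition: if any one of $a \ne 0$, $\alpha \ne b$, $\alpha > T$ holds then the unique candidate fails and $W_n(1,T,a,b,\alpha) = \emptyset$; and when $a = 0$, $b = \alpha$, $\alpha \le T$, the candidate survives, so $W_n(1,T,0,\alpha,\alpha) = \{\alpha\}$.

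I do not anticipate any real obstacle here — this is a routine base-case verification. The only subtleties worth stating explicitly are that the auxiliary sequence of a length-one word is empty (so its VT-syndrome is the empty sum $0$, not something undefined) and that the synthesis time equals the $L_1$-norm of the differential word; both are already established in Section~\ref{sec:problem}.
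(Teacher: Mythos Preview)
Your proof is correct. The paper states Proposition~\ref{prop:basedirect} without proof, treating it as a routine base-case observation; your argument supplies exactly the natural verification --- identifying $(\alpha)$ as the unique candidate and checking the three defining constraints (empty auxiliary sequence giving VT-syndrome $0$, the sum condition forcing $b=\alpha$, and $S((\alpha))=\alpha$) --- so there is nothing to compare against and no gap to report.
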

	Similar to Subsection~\ref{subsec:generalencoder}, using ranking/unranking algorithm, from~\eqref{eq:recdirect} and Proposition~\ref{prop:basedirect}, we can build $W_n(n,T,a,b,*)\triangleq\bigsqcup_{\alpha=1}^{4}{W_n(n,T,a,b,\alpha)}$ recursively from the base cases.
	The formal unranking algorithm is presented in Algorithm~\ref{alg:unranking2}, in which our goal is to find the $j$-th element of $W_n(n,T,a,b,*)$.
	
\begin{algorithm}[!h]
	\caption{$\text{unrank}(j;n,\ell,T,a,b,\alpha)$}\label{alg:unranking2}
	\textbf{Input} Integers $j,n,\ell,T,a,b,\alpha$, where $\alpha \in \mathbb{Z}_4 \cup \{*\}, b \in\mathbb{Z}_4, 1\leq \ell\leq n, a \in \mathbb{Z}_n$ and $1 \leq j \leq |W_n(\ell,T,a,b,\alpha)|.$ \\
	\textbf{Output} The $j$-th element of the set $W_n(\ell,T,a,b,\alpha)$.
	\begin{algorithmic}[1]
		\STATE If $\ell=j=1$, $b=\alpha \leq T$, and $a=0$ then \textbf{return} $(\alpha)$.
		\STATE If $\ell=n$ and $\alpha=*$, then let $\beta \leq 4$ be the smallest positive integer such that $j \leq \sum_{i=1}^\beta{|W_n(n,T,a,b,i)|}$. If $\beta>1$, then set $S=\sum_{i=1}^{\beta-1}{|W_n(n,T,a,b,i)|}$ , otherwise, if $\beta=1$ then set $S=0$. \textbf{return} unrank$(j-S;n,n,T,a,b,\beta)$.
		\STATE If $\alpha\neq *$, then let $k \leq 4$ be the smallest positive integer such that $j \leq \sum_{i=1}^k{|W_n(\ell-1,T-c,a- \mathbbm{1}_{\alpha \geq i} (\ell-1),b-\alpha,i)|}$, where $c=\alpha-i \bmod 4$.
		\STATE If $k>1$, then let $S=\sum_{i=1}^{k-1}{|W_n(\ell-1,T-c,a- \mathbbm{1}_{\alpha \geq i} (\ell-1),b-\alpha,i)|}$, where $c=\alpha-i \bmod 4$. Otherwise, if $k=1$ set $S=0$.
		\STATE \textbf{return} unrank$(j-S; n,\ell -1,T-c,a- \mathbbm{1}_{\alpha \geq j} (\ell-1),b-\alpha,j ) \circ \alpha$, where $c=\alpha-k \bmod 4$.
	\end{algorithmic}
\end{algorithm}

	The corresponding ranking algorithm can be defined analogously in Algorithm~\ref{alg:ranking2}.
		\begin{algorithm}[!h]
		\caption{$\text{rank}(\bx,n,\ell,T,a,b,\alpha)$}\label{alg:ranking2}
		\textbf{Input} $\bx \in W_n(\ell,T,a,b,\alpha)$ and integers $n,\ell,T,a,b,\alpha$, where $\alpha \in \mathbb{Z}_4 \cup \{*\}, b \in\mathbb{Z}_4, 1\leq \ell\leq n, a \in \mathbb{Z}_n$ \\
		\textbf{Output} Integer $j$ s.t. $\bx$ is the $j$-th element of $W_n(\ell,T,a,b,\alpha)$.
		\begin{algorithmic}[1]
			\STATE If $\ell=1$ and $\alpha = *$ then {\bf return} $\alpha$, else if $\ell=1$ and $\alpha \neq *$, then $ {\bf return}$ $1$.
			\STATE If $\alpha=*$, then let $\beta=x_\ell$. If $\beta>1$, then let $S=\sum_{i=1}^{\beta -1}{|W_n(\ell,T,a,b,i)|}$, otherwise, if $\beta=1$, then $S=0$. \textbf{return } $S+\text{rank}(\bx,n,\ell,T,a,b,\beta)$
			\STATE If $\alpha \neq *$, then let $\beta =x_{\ell -1}$. If $\beta >1$ then let $S= \sum_{i=1}^{\beta -1}{|W_n(\ell-1,T-c,a- \mathbbm{1}_{\alpha \geq i} (\ell-1),b-\alpha,i)|}$, where $c=\alpha-i \bmod 4$, otherwise if $\beta=1$, then let $S=0$. \textbf{return } $S+\text{rank}(x_1 x_2 \cdots x_{\ell -1},n,\ell-1,T-c,a- \mathbbm{1}_{\alpha \geq \beta} (\ell-1),b-\alpha,\beta)$, where $c=\alpha-\beta \bmod 4$.
		\end{algorithmic}
	\end{algorithm}

	Now, we are ready to present the direct encoder ${\rm ENC}^\vC$ in Algorithm~\ref{alg:directencoder}. The input of this encoder is a binary string of length $m \leq |{\rm VT}_n(a,b,T)|$ and the output is a quaternary sequence in ${\rm VT}_n(a,b,T)$.
	
	\begin{algorithm}[!h]
		\caption{Direct encoder ${\rm ENC}^\vC(m,\bx;n,T,a,b)$}\label{alg:directencoder}
		\textbf{Input} $\bx=x_1 x_2 \cdots x_{m} \in \{0,1\}^{m}$ such that $2^m \leq |{\rm VT}_n(a,b,T)|$\\
		\textbf{Output} $\bc=c_1 c_2 \cdots c_n={\rm ENC}^\vC(m,\bx;n,T,a,b) \in \text{VT}_n(a,b,T)$ 
		\begin{algorithmic}[1]
			\STATE Let $j=\sum_{i=1}^m {x_i 2^{m-i}}$ be the decimal representation of the binary string $\bx$.
			\STATE \textbf{return} $\text{unrank}(j;n,n,T,a,b,*)$
		\end{algorithmic}
	\end{algorithm}
	
	For completeness, we also provide our decoder map ${\rm DEC}^\vC$ for ${\rm ENC}^\vC$ in Algorithm \ref{alg:directdecoder}.
	
	\begin{algorithm}[!h]
		\caption{Direct decoder ${\rm DEC}^\vC(m;n,T,a,b,\bc)$}\label{alg:directdecoder}
		\textbf{Input} $\bc \in \Sigma^{n*}$\\
		\textbf{Output} $\bx=x_1 x_2 \cdots x_m={\rm DEC}^\vC(m;n,T,a,b,\bc) )$ 
		\begin{algorithmic}[1]
			\STATE Let $\by={\rm DEC}_{a,b}(\bc) \in \Sigma^n$
			\STATE Let $j = \text{rank}(\by,n,n,T,a,b,*)$
			\STATE Let $\bx$ be the binary representation of length $m$ of $j$, and \textbf{return} $\bx$.
		\end{algorithmic}
	\end{algorithm}
The result is formalized in the Theorem~\ref{thm:directencoder}.
	\begin{theorem}
		\label{thm:directencoder}
		The encoder ${\rm ENC}^\vC$ is a \textit{single-indel correcting synthesis-constrained encoder} that maps a binary string $\bx$ of length $m$ to a quaternary codeword in ${\rm VT}_n(a,b,T)$. Furthermore, ${\rm ENC}^\vC$ runs in $O(n^5)$ time and there exist $a \in \mathbb{Z}_n$ and $b \in \mathbb{Z}_4$ such that this encoder has at most $2n-\floor{\log_2 A(n,T)}+\log_2 n + 2 $  bits of redundancy. Thus the code $\cC=\{{\rm ENC}^\vC(m,\bx;n,T,a,b): \bx \in \{0,1\}^m \}$ is an $(n,T,\B)$-synthesis code for a fixed $n<T \leq 4 n$.
	\end{theorem}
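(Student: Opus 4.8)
The plan is to verify the three claimed properties of ${\rm ENC}^\vC$ in turn: correctness as a single-indel correcting synthesis-constrained encoder, the running time, and the redundancy bound. For \emph{correctness}, I would first observe that the output $\bc = {\rm unrank}(j; n, n, T, a, b, *)$ lies in $W_n(n,T,a,b,*) = {\rm VT}_n(a,b,T)$, provided the ranking/unranking pair is valid; this in turn reduces to checking the recursion \eqref{eq:recdirect} and the base cases in Proposition~\ref{prop:basedirect}. So the heart of the correctness argument is to prove that \eqref{eq:recdirect} is a genuine disjoint-union decomposition: given $\bx = x_1\cdots x_\ell$ ending in $\alpha$, peeling off the last symbol leaves a word ending in some $i = x_{\ell-1}$; I must check how $S$, the VT-syndrome of the auxiliary sequence, and the $L_1$-sum each change under this peeling. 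The synthesis time drops by $c = \alpha - i \bmod 4$ by the identity $S(\bx) = \|D(\bx)\|$; the sum drops by $\alpha$; and the auxiliary bit $\tilde x_{\ell-1}$ equals $\mathbbm{1}_{\alpha \ge i}$ (with the convention $x_\ell \ge x_{\ell-1}$ iff $\alpha \ge i$), contributing $\mathbbm{1}_{\alpha \ge i}(\ell-1)$ to ${\rm VT}(\tilde\bx)$, which is why the first syndrome argument becomes $a - \mathbbm{1}_{\alpha\ge i}(\ell-1)$. Disjointness across $i$ is immediate since $i = x_{\ell-1}$ is determined by $\bx$. Then correctness of ${\rm ENC}^\vC$ follows: ${\rm VT}_n(a,b,T) \subseteq {\rm VT}_n(a,b)$, so by Theorem~\ref{thm:indeldecoder} the decoder ${\rm DEC}_{a,b}$ recovers $\by \in {\rm VT}_n(a,b,T)$ from any $\bz \in \B(\by)$, and then ${\rm rank}$ inverts ${\rm unrank}$ to return $j$ and hence $\bx$; condition (ii) of Definition~\ref{def:constrainedencoder} holds because every codeword has $S(\bx) \le T$ by construction.

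For the \emph{running time}, I would argue that the table of sizes $|W_n(\ell,T,a,b,\alpha)|$ can be precomputed by dynamic programming over the recursion: there are $O(n)$ choices of $\ell$, $O(n)$ of $T$ (since $T \le 4n$), $O(n)$ of $a$, and $O(1)$ each of $b$ and $\alpha$, so $O(n^4)$ table entries, each computed in $O(1)$ time from four previously computed entries — giving an $O(n^4)$ preprocessing cost. Each call to ${\rm unrank}$ (or ${\rm rank}$) recurses $O(n)$ times, doing $O(1)$ table lookups and arithmetic per level, for $O(n)$ per encode/decode after preprocessing. Being slightly careful about the cost of arithmetic on the $O(n)$-bit integer $j$, one still lands at $O(n^5)$ overall, dominated either by preprocessing with big-integer bookkeeping or by the size computations; I would state this as the claimed $O(n^5)$ without optimizing constants, consistent with the paper's stated intent not to optimize the running-time analysis.

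For the \emph{redundancy}, the input length is $m = \floor{\log_2 |{\rm VT}_n(a,b,T)|}$, so the redundancy is $2n - \floor{\log_2 |{\rm VT}_n(a,b,T)|}$. By Theorem~\ref{thm:VTabT} there is a choice of $(a,b) \in \mathbb{Z}_n \times \mathbb{Z}_4$ with $|{\rm VT}_n(a,b,T)| \ge A(n,T)/(4n)$, hence $\floor{\log_2 |{\rm VT}_n(a,b,T)|} \ge \floor{\log_2 A(n,T)} - \log_2(4n) = \floor{\log_2 A(n,T)} - \log_2 n - 2$, which gives the redundancy bound $2n - \floor{\log_2 A(n,T)} + \log_2 n + 2$. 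Finally, this code is an $(n,T,\B)$-synthesis code since it is a subset of ${\rm VT}_n(a,b,T)$, which is one by Theorem~\ref{thm:VTabT}. The step I expect to be the main obstacle is the bookkeeping in verifying \eqref{eq:recdirect} — in particular getting the index shift $a - \mathbbm{1}_{\alpha \ge i}(\ell-1)$ exactly right, since the auxiliary sequence $\tilde\bx$ has length $\ell-1$ and its last bit sits in position $\ell-1$, and one must confirm that the VT-syndrome ${\rm VT}(\tilde\bx) = \sum_{j} j\,\tilde x_j$ of the length-$\ell$ prefix relates to that of the length-$(\ell-1)$ prefix by exactly the term $\mathbbm{1}_{\alpha \ge i}(\ell-1)$, and that all the reductions (mod $n$ for $a$, mod $4$ for $b$ and the symbols) are consistent with the shifted-modulo conventions in force throughout.
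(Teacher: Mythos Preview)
The paper states Theorem~\ref{thm:directencoder} without proof, so there is nothing to compare against directly; your proposal supplies exactly the argument the paper omits, and it is essentially correct.

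Two minor remarks. First, in the running-time analysis you count $O(n^4)$ table entries, but with $\ell,T,a$ each ranging over $O(n)$ values and $b,\alpha$ over $O(1)$ values the table has $O(n^3)$ entries; with $O(n)$-bit big-integer additions this gives $O(n^4)$ preprocessing, and the $O(n)$ recursive steps with $O(n)$-bit comparisons give $O(n^2)$ per encode. Either way $O(n^5)$ is a valid (if loose) upper bound, matching the paper's claim and its stated intent not to optimize constants. Second, the floor-function manipulation in the redundancy step---going from $\lfloor\log_2|{\rm VT}_n(a,b,T)|\rfloor \ge \lfloor\log_2 A(n,T)-\log_2(4n)\rfloor$ to $\ge \lfloor\log_2 A(n,T)\rfloor-\log_2 n-2$---is not literally valid for arbitrary reals, but the discrepancy is at most one bit and the paper's own statement mixes $\lfloor\cdot\rfloor$ with the non-integer $\log_2 n$, so your level of precision matches the paper's. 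The core of your argument---verifying the recursion~\eqref{eq:recdirect} by tracking how $S(\bx)$, ${\rm VT}(\tilde\bx)$, and $\sum x_i$ change when the last symbol is peeled off, then invoking Theorem~\ref{thm:VTabT} for both the pigeonhole redundancy bound and the $(n,T,\B)$-synthesis property---is exactly right.
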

	
		\section*{Acknowledgement}
		
		This research of Han Mao Kiah is supported by the Ministry of Education, Singapore, under its MOE AcRF Tier 2 Award MOE-T2EP20121-0007.
	
	\newpage

\end{document}